\setlist[itemize]{leftmargin=*, itemsep=0.2em, label={\scriptsize$\bullet$}}
\newcommand{\notyet}[1]{}
\newcommand{\squeezelist}{\setlength{\itemsep}{0pt}}
\newcommand{\ABox}{
\raisebox{3pt}{\framebox[6pt]{\rule{6pt}{0pt}}}
}
\newenvironment{proof}{{\bf Proof:}}{\hfill\ABox}
\newtheorem{theorem}{{\bf Theorem}}
\newtheorem{lemma}[theorem]{Lemma}
\newtheorem{definition}[theorem]{Definition}
\def\O{{\mathcal O}}
\def\I{\bm{\mathcal{I}}}
\def\P{\mathnormal P}
\def\I{\bm{I}}
\def\clip{clip}
\def\beam{beam}
\def\beams{beams}
\def\beams{beams}
\crefname{theorem}{Theorem}{Theorems}
\crefname{lemma}{Lemma}{Lemmas}
\crefname{proposition}{Proposition}{Propositions}
\crefname{corollary}{Corollary}{Corollaries}
\crefname{figure}{Figure}{Figures}
\crefname{table}{Table}{Tables}
\crefname{property}{Property}{Properties}
\crefname{section}{Section}{Sections}
\crefname{definition}{Definition}{Definitions}
\newcommand\hide[1]{}
\begin{document}


\title{Unfolding Polycubes with Orthogonally Convex Layers\tnoteref{tnote}}
\tnotetext[tnote]{Preliminary ideas explored in this work were first introduced in~\cite{Damian-Meijer-2004-orthostacks}.}

\author{Mirela Damian\fnref{md}\corref{cor1}}
\ead{mirela.damian@villanova.edu}
\author{Henk Meijer\fnref{hm}}
\ead{hendrikus.meijer@gmail.com}
\cortext[cor1]{Corresponding author}

\affiliation[md]{organization={Villanova University},
            city={Villanova},
            state={PA},
            country={USA}}
\affiliation[hm]{organization={University College Roosevelt},
            city={Middelburg},
            country={Netherlands}}

\begin{abstract}
A \emph{polycube} is an orthogonal polyhedron composed of unit cubes glued together along entire faces, and homeomorphic to a sphere. A \emph{layer} of a polycube refers to the portion lying between two horizontal cross-sections spaced one unit apart. We present an unfolding algorithm that flattens any polycube with orthogonally convex layers into a single, non-overlapping planar piece. The algorithm makes cuts only along cube edges—that is, it is an \emph{edge unfolding}.
\end{abstract}


\begin{keyword}
unfolding  algorithms \sep orthostacks \sep polycubes \sep orthogonally convex 
\vspace{0.5em}
\MSC 52C45 \sep 05C10
\end{keyword}

\maketitle              

\vspace{-1em}
\section{Introduction}
An \emph{unfolding} of a polyhedron involves cutting its surface and flattening
it into a single, non-overlapping planar piece called an \emph{unfolding net}. 
\emph{Edge unfoldings} limit cuts to the edges of the polyhedron, whereas 
\emph{general unfoldings} allow cuts anywhere, including across face interiors.

The problem of unfolding polyhedra has been extensively studied. 
For non-convex polyhedra, it has been shown that edge cuts alone are not sufficient to
guarantee an unfolding~\cite{Bern-Demaine-Eppstein-Kuo-Mantler-Snoeyink-2003,BDDLOORW1998},
and it remains open whether all non-convex polyhedra admit a general unfolding.
In contrast, every convex polyhedron does have a general unfolding~\cite[Ch.~22]{Demaine-O'Rourke-2007}, but it is still open whether all convex polyhedra have an edge unfolding. 

Research on unfolding non-convex objects has primarily focused on \emph{orthogonal} polyhedra, whose edges and faces meet at right angles.
However, since orthogonal polyhedra cannot always be unfolded using edge cuts alone~\cite{BDDLOORW1998}, 
unfolding algorithms typically introduce additional non-edge cuts. These extra cuts occur along a  refined grid structure created by intersecting the polyhedron's surface with axis-aligned planes through each vertex. The surface is thereby divided into rectangular faces, which can be further subdivided into a finer $a \times b$ grid of rectangular subfaces, for positive integers $a, b \ge 1$. 
Cuts are permitted along any of these introduced grid lines, enabling more flexible unfolding strategies.

For orthogonal polyhedra homeomorphic to a sphere, progressive results have reduced the amount of refinement required for unfolding.  
The initial method~\cite{Damian-Flatland-O'Rourke-2007-epsilon} required exponential refinement, later improved to quadratic refinement~\cite{Damian-Demaine-Flatland-2014-delta} and then to linear refinement~\cite{Chang2015}. 
These techniques were further extended to unfold genus-$2$ orthogonal polyhedra using 
linear refinement~\cite{Damian-Demaine-Flatland-2017-O'Rourke-genus2}.
Constant refinement has been achieved only for specialized subclasses of orthogonal polyhedra, including orthostacks using $1 \times 2$ refinement~\cite{BDDLOORW1998},  
Manhattan Towers using $4 \times 5$ refinement~\cite{Damian-Flatland-O'Rourke-2005-manhattan}, and
polycube trees using $4 \times 4$ refinement~\cite{Damian-Flatland-21}. 
Unfolding techniques using $1 \times 1$ refinement 
exist for orthotubes~\cite{BDDLOORW1998,DK-2023},
well-separated orthotrees~\cite{Damian-Flatland-Meijer-O'Rourke-2005-orthotrees}, 
orthostacks with rectangular faces~\cite{CST-2012}, and orthostacks with rectangular slabs~\cite{Klara-2024}. 
Also, single-layer polycubes with sparse cubic holes~\cite{Liou-Poon-Wei-2014-onelayer}
and general cubic holes~\cite{Josef-22} can be unfolded using $1 \times 1$ refinement. 

%

This paper explores edge unfoldings of polycubes (problem 64 from The Open Problems Project~\cite{open-problems}). 
A \emph{polycube}  is an orthogonal polyhedron composed of unit 
cubes\footnote{If rectangular boxes are used in place of unit cubes, this is known as an \emph{orthostack}.} 
glued together along entire faces, homeomorphic to a sphere.  
For polycubes, all edges of the constituent cubes are available for cuts.
Conceptually, a polycube can be visualized as a stack of layers, each forming a unit-height prism with a polyomino base. 
This paper considers polycubes with \emph{orthogonally convex layers}, 
where each layer intersects a line parallel to a coordinate axis in either a single line segment or not at all. Our main result is as follows: 

\begin{theorem}[Main result]
Any polycube with orthogonally convex layers has an edge unfolding. 
\end{theorem}
%


\vspace{-1.5em}
\section{Terminology}
\vspace{-0.5em}
Throughout this paper, $\O$ refers to a polycube with orthogonally convex layers. 
We use the term  \emph{face} to refer to a face of $\O$ and the term \emph{cell} to specifically mean one of the individual unit cube faces 
that make up the larger faces.
Therefore, a face is a maximal set of edge-connected, coplanar cells. 
Two non-overlapping surface pieces of $\O$ are considered \emph{adjacent} if their boundaries share at least one cell edge. 

In space, we use the term \emph{vertical} to refer to the $z$-direction and \emph{horizontal} to refer to the $x$- and $y$-directions. 
Faces are categorized by the direction of their outward normals: right is $+x$; left is $-x$; front is $+y$; back is $-y$; top is $+z$; and bottom is $-z$. %
Let $z_0, z_1, \ldots, z_m$ be the distinct $z$-coordinates of the vertices of $\O$. Let $z = z_i$ denote the $i$-\emph{plane}. 
We define the $i$-\emph{band} as the collection of vertical cells parallel to the $z$-axis that lie between the $(i-1)$-plane and the $i$-plane, 
for $i = 1, 2, \ldots, m$. 
By the definition of a polycube, each $i$-band is the boundary of an extruded simple orthogonal polygon and is therefore connected. The layer $\O_i$ is the portion of $\O$ bounded by the $i$-band, the $(i-1)$-plane, and the $i$-plane, forming a prism with an orthogonal polygon as its base. Thus, $\O$ can be viewed as a collection of orthogonally convex layers $\O_1, \O_2, \ldots, \O_m$ stacked in this order along the positive $z$-axis. 
See~\cref{fig:defs-1}(a) for an example of a polycube with 7 orthogonally convex layers.  

\begin{figure}[htp]
    \centering
    \includegraphics[page=1,width=0.98\textwidth]{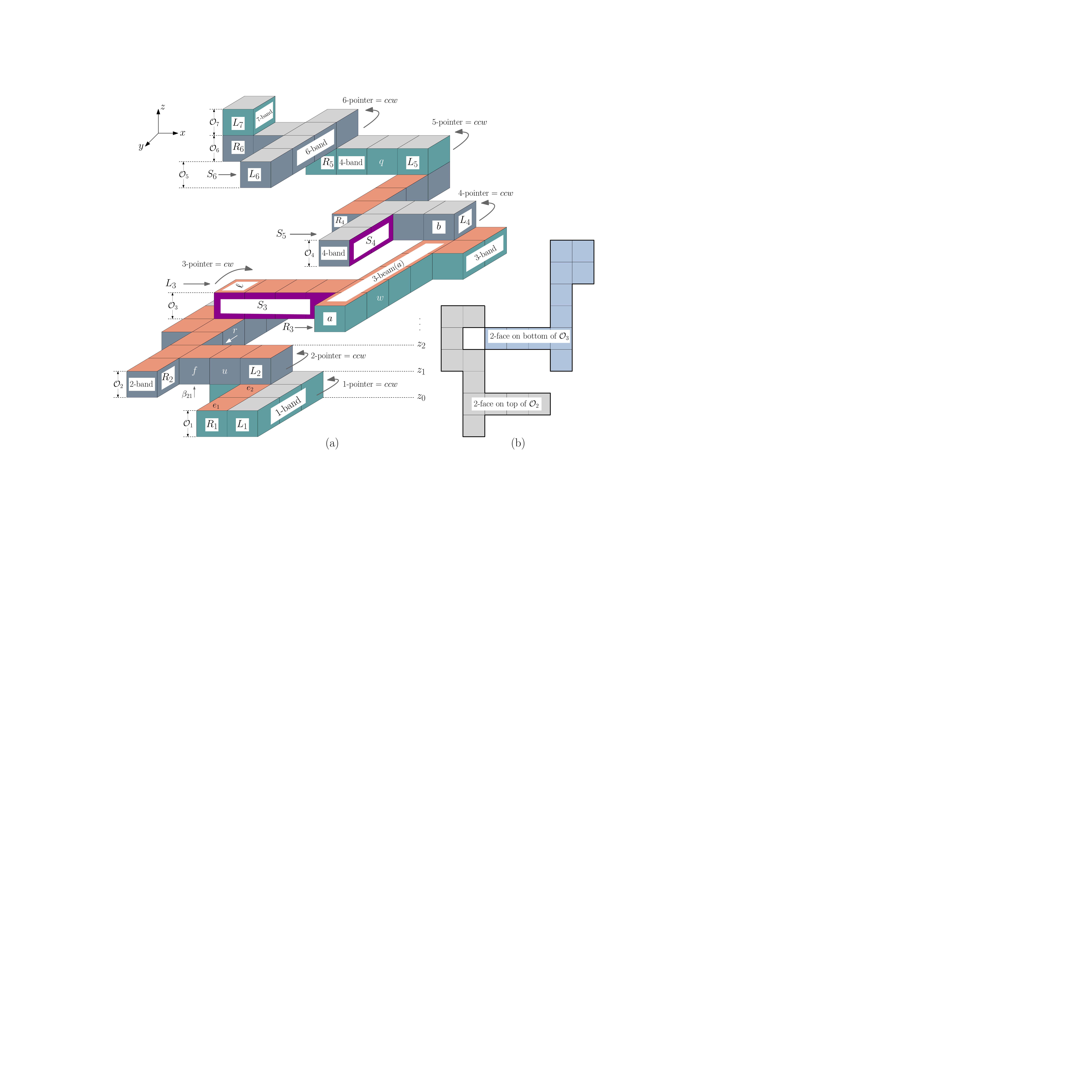}
    \caption{(a) Polycube $\O$ with 7 orthogonally convex layers;  visited $i$-band segments are delimited by $L_i$ and $R_i$, and $i$-bridges  are highlighted in brown, for $i = 1, \ldots, 6$ (other labels will be discussed in~\cref{sec:unfolding}.)    
    (b) The $2$-faces of $\O$.}
    \label{fig:defs-1}
\end{figure}

An $i$-\emph{face} is a face of $\O$ that lies in the $i$-plane. 
Thus, the surface of a polycube is composed of vertical $i$-bands and horizontal $i$-faces. 
Note that an $i$-face may be positioned either on top of $\O_i$ or on the bottom of $\O_{i+1}$: if on top of $\O_i$, it is called a \emph{top} face; 
if on the bottom of $\O_{i+1}$, it is called a \emph{bottom} face. 
See~\cref{fig:defs-1}(b) for an example. 
Note that a top (bottom) face has the interior of $\O$ below (above) it. 
Similar definitions apply to cells. 

For any straight band segment $S$ adjacent to an $i$-face along an edge segment $e$, $i$-\emph{\beam}$(S)$ is defined as the  
$i$-face piece illuminated by rays emitted from $e$ in the direction orthogonal to $S$.
When talking about $i$-\beam($a$), for a single band cell $a$, we call $a$ an \emph{anchor} to suggest the possibility of fastening (or ``anchoring'') the beam to $a$ during unfolding. We say that $i$-\beam($a$) is \emph{anchored} on $a$, or more broadly, on the band containing $a$. 
Note that $i$-\emph{\beam}$(a)$ has two parallel \emph{anchors}: one is $a$, and the other is the band cell $b$ that is parallel to $a$ and 
adjacent to $i$-\emph{\beam}$(a)$. 
If $b$ lies on the $i$-band, it is called \emph{opposite} to $a$; otherwise, $b$ lies on the $(i+1)$-band.
For example, in~\cref{fig:defs-1}(a), $3$-beam($a$) is anchored on cells $a$ and $b$, which are on the $3$-band and the $4$-band, respectively. 
Note that, if an $i$-beam is anchored on two cells $a$ and $b$, $i$-\beam($a$) is equivalent to $i$-beam($b$). Two beams are considered \emph{parallel} if their anchors are parallel. 

\vspace{-0.2em}
\begin{lemma} [Restatement of Lemma 2 from~\cite{CST-2012}]
The perimeter of any $i$-face is divided into two contiguous components: one adjacent to the $i$-band and one adjacent to the $(i+1)$-band. 
Moreover, every $i$-face has two opposite cell edges on its boundary, one on the 
$i$-band and one on the $(i+1)$-band. 
\label{lem:opposite-edges}
\end{lemma}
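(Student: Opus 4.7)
The plan is to reduce the statement to a two-dimensional claim about the orthogonally convex polyomino bases of the two adjacent layers, and then apply elementary planar topology. Without loss of generality let $f$ be a top $i$-face, sitting on top of $\mathcal{O}_i$ (the bottom-face case is symmetric by reflection across the $i$-plane). Write $P_i$ and $P_{i+1}$ for the orthogonally convex polyomino bases of $\mathcal{O}_i$ and $\mathcal{O}_{i+1}$ viewed in the $i$-plane, so that $f$ is a connected component of $\overline{P_i \setminus P_{i+1}}$. First I would classify every edge $e$ of $\partial f$ by the neighbor cell $c'$ on its non-$f$ side: if $c' \notin P_i$ then $e$ lies on $\partial P_i$ and is adjacent to the $i$-band; if $c' \in P_i \cap P_{i+1}$ then $e$ lies on $\partial P_{i+1}$ and is adjacent to the $(i+1)$-band. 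The remaining case $c' \in P_{i+1} \setminus P_i$ I would rule out, because such an edge would force four surface pieces of $\mathcal{O}$ (a top face, a bottom face, an $i$-band wall and an $(i+1)$-band wall) to meet along it, violating the $2$-manifold structure implied by $\mathcal{O}$ being homeomorphic to a sphere.

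For the two-contiguous-components claim, the key observation is that $Q := P_i \cap P_{i+1}$ is itself orthogonally convex, hence a topological disk. If $Q \subset \mathrm{int}(P_i)$ then $f = \overline{P_i \setminus Q}$ is a single annular face whose boundary splits naturally into an outer loop on $\partial P_i$ and an inner loop on $\partial P_{i+1}$, each trivially contiguous. Otherwise $Q$ meets $\partial P_i$, and I would use a standard disk-in-a-disk topology argument: if $\partial Q$ meets $\partial P_i$ in $t \ge 1$ arcs, then these carve $\overline{P_i \setminus Q}$ into exactly $t$ simply connected components, each bordering $Q$ along precisely one of the $t$ arcs of $\partial Q \cap \mathrm{int}(P_i)$. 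Consequently $\partial f$ is a single closed curve partitioned into one contiguous $\partial P_i$-arc and one contiguous $\partial P_{i+1}$-arc, meeting at exactly two transition vertices.

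For the opposite-edges claim, \cref{lem:large-face} guarantees $\partial f$ has at least one edge on each band, and I would produce an opposite pair by an axis-aligned sweep. In the annular case, a horizontal line crossing $Q$ meets $f$ in an interval whose left endpoint lies on the inner loop (on $\partial P_{i+1}$) and right endpoint on the outer loop (on $\partial P_i$); the two cell edges at these endpoints are vertical, aligned at the same horizontal level, and hence opposite. In the simply connected case, sweeping perpendicular to the direction in which $Q$ intrudes into $P_i$ similarly yields one exit edge on each band at the same scan level. The main obstacle I expect is the contiguity step in the non-annular case: ruling out the possibility that a single component $f$ borders $Q$ along multiple disjoint arcs of $\partial Q$, which would introduce extra transition vertices and break the two-arc decomposition. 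This is exactly where the simple connectedness of $Q$---a consequence of the orthogonal convexity of both $P_i$ and $P_{i+1}$---is indispensable, since without it $Q$ could wrap around $f$ in ways that split its $\partial P_{i+1}$-arc into multiple pieces.
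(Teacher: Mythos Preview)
The paper does not supply its own proof of this lemma; immediately after stating it, the authors write that proofs can be found in the cited reference, where the result is established for general orthostacks without any orthogonal-convexity hypothesis on the layers. There is therefore no in-paper argument to compare your proposal against.

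Your sketch is essentially sound for the setting of this paper. The classification of boundary edges via the $2$-manifold property is correct, as is the disk-in-a-disk topology once $Q = P_i \cap P_{i+1}$ is known to be a disk; and your sweep for the opposite-edges claim works because orthogonal convexity of $P_{i+1}$ prevents re-entry into $Q$ while orthogonal convexity of $P_i$ forces a single exit edge on $\partial P_i$. One point does need tightening: you infer that $Q$ is a topological disk from orthogonal convexity alone, but an orthogonally convex polyomino need not be connected (two diagonally offset unit cells form an orthogonally convex yet disconnected set). Connectedness of $Q$ must be argued separately from the sphere topology of $\mathcal{O}$: if $P_i \cap P_{i+1}$ had two components, the portion of $\mathcal{O}$ above the $i$-plane would meet the portion below along two disjoint disks, creating a handle and contradicting genus zero. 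Once that step is added, your argument is complete in the orthogonally convex setting---though it is strictly more specialized than the cited result, which holds without assuming orthogonal convexity of the layers.
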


\vspace{-1.5em}
\section{Band Segments and Bridges}
The main idea of our unfolding algorithm is simple. For each $i$-band, we identify two band cells, $L_i$ and $R_i$, whose names reflect their roles as the leftmost and rightmost 
$i$-band cells in the unfolding net. We also identify an unfolding direction, $i$-$pointer \in \{cw, ccw\}$, to guide the unfolding of $\O_i$, where $cw$ and $ccw$ denote clockwise and counterclockwise, respectively, as viewed from $+z$. 

Consider any two $i$-band cells, $a$ and $b$. We define $i$-$band[a, b]$ as the \emph{closed} $i$-band segment from $a$ to $b$, inclusive, following the direction of the $i$-pointer. Similarly, $i$-$band(a, b)$ denotes the \emph{open} segment from $a$ to $b$, excluding both endpoints $a$ and $b$. By these definitions, the entire $i$-band is the union of $i$-band$[a, b]$ and $i$-band$(b, a)$. 

The unfolding algorithm cuts along the perimeter of each $i$-band$[L_i, R_i]$ and unfolds it as a horizontal rectangular piece in the plane, with $L_i$ to the left and $R_i$ to the right. For consecutive layers, the band segments $i$-band$[L_i, R_i]$ and $(i+1)$-band$[L_{i+1}, R_{i+1}]$ are selected so that $R_i$ and $L_{i+1}$ are parallel. The beams $i$-\beam$(R_i)$ and $i$-\beam$(L_{i+1})$ delimit a connected region of the $i$-face---called the $i$-\emph{bridge}---that links the two band segments,
with $i$-\beam$(R_i)$ to the left and $i$-\beam$(L_{i+1})$ to the right in the unfolding. 
The $i$-band cells outside of $i$-band$[R_i, L_i]$ are handled separately.
With few exceptions, $i$-\beams\ are attached above and below appropriate anchors in the band unfoldings. 

\setlength{\intextsep}{-0.2em}%
\setlength{\columnsep}{1em}%
\begin{wrapfigure}{r}{0.32\textwidth}
\raisebox{-1.5em}{\includegraphics[page=2,width=0.31\textwidth]{figures/defs}}
 \caption{Top view of an $i$-face; $i$-\clip($r$) is the dark-shaded piece.}
\label{fig:defs-clip}
\end{wrapfigure}

\begin{definition} [{\bf $i$-\clip}]
For a fixed $i$-band cell $r$, $i$-\clip($r$) is the portion of the $i$-face starting at (and including)  $i$-\beam($r$) and extending in the direction of the $i$-pointer. If $i$-\beam($r$) is empty, then $i$-\clip($r$) is also empty. 
\label{def:iclip}
\end{definition}

\noindent
This definition is depicted in~\cref{fig:defs-clip}. 
Note that $i$-\clip($r$) is connected, since it includes $i$-\beam($r$). 

\subsection{Selecting Band Segments and Bridges}
\label{sec:bands-bridges}
In this section, we show how to select $L_i$, $R_i$, the $i$-bridge, and the $i$-pointer, for increasing $i \ge 1$. 
We start by setting both $1$-pointer and $2$-pointer to $ccw$. 
Let $f$ be any $1$-face of $\O_1$, and let $e_1$ and $e_2$ be two opposite grid edges of $f$, adjacent to the $1$-band and the $2$-band, respectively. By~\cref{lem:opposite-edges}, such edges exist. 
We set the following: $R_1$ is the $1$-band cell adjacent to $e_1$; the $1$-bridge is $1$-\beam($R_1$); $L_1$ is the $1$-band cell adjacent to $R_1$ in the direction of the $1$-pointer (i.e., $ccw$); and $L_2$ is the $2$-band cell adjacent to $e_2$. See~\cref{fig:defs-1}(a). 

For a fixed $i > 1$, suppose that both $L_i$ and the $i$-pointer for traversing the $i$-band are fixed. This assumption holds as we  handle $i$-bands sequentially, for increasing $i$. We now show how $R_{i}$, $L_{i+1}$, the $i$-\emph{bridge} connecting the two, and  the $(i+1)$-pointer are selected by our unfolding approach. 

\begin{definition} [{\bf $R_i$, $L_{i+1}$, $i$-bridge, $(i+1)$-pointer}] 
Starting at $L_i$, walk \\ around the $i$-band in the direction of the $i$-pointer. Let $R_i$ to be the  \emph{last} $i$-band cell encountered such that $i$-\clip$(R_i)$ is either empty or adjacent to an $(i+1)$-band cell parallel to $R_i$. 

\medskip
\noindent
If $i$-\clip($R_i$) is empty, set $L_{i+1}$ to the $(i+1)$-band cell adjacent to $R_i$, the $i$-bridge to empty, and the $(i+1)$-pointer equal to the $i$-pointer.  

\medskip
\noindent
Otherwise, partition $i$-\clip($R_i)$ into beams parallel to $i$-\beam$(R_i)$: 
\begin{enumerate}
\squeezelist
\item [$(a)$] Set $L_{i+1}$ to the $(i+1)$-band cell parallel to $R_i$ and adjacent to $i$-\clip($R_i$) that  minimizes the number of beams between $i$-\beam($R_i$) and $i$-\beam($L_{i+1}$); in case of ties, 
choose the one closest to $R_i$ in Manhattan distance. 
\item [$(b)$] Set the $i$-bridge to the $i$-face piece delimited by $i$-\beam$(R_i)$ and $i$-\beam$(L_{i+1})$. 
\item [$(c)$] If $R_i$ and $L_{i+1}$ share the same normal, set the $(i+1)$-pointer to the $i$-pointer; otherwise, reverse it. 
\end{enumerate}
A cell in the $i$-band is \emph{visited} if it is in $i$-band$[L_i, R_i]$, and \emph{unvisited} otherwise.
\label{def:ri}
\end{definition}

\vspace{-1em}
\noindent
For an illustration of~\cref{def:ri}, see~\cref{fig:defs-1,fig:ortho-ex-complete}: 
all bridges are marked in brown; the $1$-bridge is a single beam parallel to the $y$-axis;
the 2-bridge consists of five top beams parallel to the $x$-axis; 
the 3-bridge spans the entire 3-face with six top beams parallel to the $x$-axis; 
the 4-bridge is a single top beam parallel to the $y$-axis;
the $5$-bridge is a single bottom beam parallel to the $y$-axis;
and the $6$-bridge is empty. 

\medskip
The following lemmas capture key structural properties that guide the unfolding process. 
To keep the focus on the unfolding algorithm, we defer their proofs to~\ref{sec:appendix-lemmas}. 

\begin{restatable}{lemma}{liriexist}
\label{lem:ri}
$R_i$ and $L_{i+1}$ exist and are uniquely defined. 
\end{restatable}

\vspace{-1em}
\begin{restatable}{lemma}{beamends}
\label{lem:beam-ends}
Any beam on the surface of $\O_i$ is anchored on the $i$-band.
\end{restatable}

\vspace{-1em}
\begin{restatable}{lemma}{liriout}
\label{lem:liri-out}
For every unvisited $i$-band cell $u$, $i$-\clip($u$) is non-empty, lies on top of $O_i$, and is not adjacent to any $(i+1)$-band cell parallel to $u$.
\end{restatable}

\vspace{-1em}
\begin{restatable}{lemma}{bridgebot}
\label{lem:bridge-bot}
If the $i$-bridge is on a bottom face, it consists of a single beam. 
\end{restatable}

\vspace{-1em}
\begin{restatable}{lemma}{lemstrips}
\label{lem:strips}
For every unvisited $i$-band cell $u$ parallel to $R_i$, $i$-\beam$(u)$ is non-empty and is anchored on a visited $i$-band cell. 
\end{restatable}

\vspace{-1em}
\begin{restatable}{lemma}{frontallback}
\label{lem:front-allback}
All $i$-band cells with normals opposite to that of $R_i$ are visited. However, $L_i$ may not be opposite to $R_i$.
\end{restatable}

\section{Unfolding Algorithm}
\label{sec:unfolding}
The unfolding algorithm is fairly simple, but its correctness relies on intricate reasoning.
In this paper we present the algorithm itself and refer the reader to~\cite{DM24} for a 
detailed proof of correctness.

\medskip
\noindent
Our unfolding procedure for $\O$ consists of four stages: 

\medskip
{\bf Stage 1:} Unfolding visited band segments and bridges (\cref{sec:bands}) 

\vspace{0.4em}
{\bf Stage 2:} Unfolding the top surface of $\O$ (\cref{sec:top}) 

\vspace{0.4em}
{\bf Stage 3:} Unfolding the bottom surface of $\O$ (\cref{sec:bottom}) 

\vspace{0.4em}
{\bf Stage 4:} Unfolding remaining band pieces of $\O$ (\cref{sec:allbands})

\medskip
\medskip
\noindent
We say that the $i$-band is \emph{nearly-visited} if $L_i$ and $R_i$ are orthogonal and connected by a \emph{straight} $i$-band segment. For example, in~\cref{fig:defs-1}(a), the $2$-band is nearly-visited, but the $3$-band is not, since $L_3$ and $R_3$ are parallel. 

\subsection{Stage 1: Unfolding Visited Band Segments and Bridges}  
\label{sec:bands}

For each $i \ge 0$, we cut out the entire visited segment $i$-band$[L_i, R_i]$ and unfold it horizontally in the plane, with $L_i$ to the left and $R_i$ to the right. 
The $i$-bridge, delimited by $i$-\beam($R_i$) and $i$-\beam($L_{i+1}$), unfolds as a single piece 
connecting $i$-band$[L_i, R_i]$ and $(i+1)$-band$[L_{i+1}, R_{i+1}]$. See~\cref{fig:ortho-ex-complete}.

\subsection{Stage 2: Unfolding the Top Surface of $\O$}  
\label{sec:top}
We now describe the unfolding process for the top surface of each $\O_i$, for $i \ge 1$. 
Without loss of generality, we assume that the $i$-pointer is $ccw$; the $cw$ case is symmetric.

We begin by partitioning the top faces of $\O_i$ into beams parallel to $i$-\beam($R_i$). 
Consider a beam $\beta$ in this partition that is not part of the $i$-bridge. 
By~\cref{lem:strips}, $\beta$ has at least one visited anchor $r$ on the $i$-band. 
\begin{itemize}
\squeezelist
\item If $\beta$'s second anchor $q \neq r$ is not a visited $i$-band cell, we attach $\beta$ above $r$ in the unfolding net, along their shared side.
If $q$ is an unvisited $i$-band cell, we also attach $q$ above $\beta$ in the unfolding net. 
See, for example, the $5$-band cell labeled $q$ in~\cref{fig:ortho-ex-complete}.
\item If both of $\beta$'s anchors are visited $i$-band cells and one anchor is $L_i$, we attach $\beta$ above $L_i$. Otherwise, we attach $\beta$ 
above its \emph{second} visited anchor, along their shared side. 
See, for example, the $2$-beam labeled $\beta_{21}$ in~\cref{fig:ortho-ex-complete}. 
This strategy minimizes the need to relocate beams later during unfolding. 
\end{itemize}
At this stage, the unfolding net includes the entire top surface of $\O_i$ and all $i$-band cells parallel to $R_i$. 

\subsection{Stage 3: Unfolding the Bottom Surface of $\O$}  
\label{sec:bottom}%
This section describes the unfolding process for the bottom surface of each $\O_i$, for $i \ge 1$. 
As in the previous stage, we assume that the $i$-pointer is $ccw$; the $cw$ case is symmetric.

We begin by partitioning the bottom faces of $\O_i$ into beams that are parallel to $(i-1)$-\beam($L_i$) if the $i$-band is nearly-visited, and to $(i-1)$-\beam($R_i$) otherwise. As we will later see, this choice ensures that no bottom beam is anchored on two unvisited  $i$-band rectangles orthogonal to $R_i$. 

Consider a beam $\beta$ in this partition that is not part of the $(i-1)$-bridge. 
By~\cref{lem:beam-ends}, at least one of $\beta$'s anchors is on the $i$-band. 
We determine where to attach $\beta$ to the unfolding net.

\paragraph{Unfolding the bottom of $\O_i$ when the $i$-band is nearly-visited} In this case, $\beta$ is parallel  
to $(i-1)$-\beam($L_i$). By~\cref{lem:beam-ends}, at least one of $\beta$'s anchors is on the $i$-band. 
Since the $i$-band is nearly-visited, if both anchors are on the $i$-band, at least one is visited. We attach $\beta$ 
as follows:

\begin{itemize}
\squeezelist
\item If both of $\beta$'s anchors are on the $i$-band, but only one---say, $r$---is visited, attach 
$\beta$ below $r$, and $\beta$'s unvisited anchor below $\beta$. 
See, for example, the $1$-beam labeled $\beta_{22}$ in~\cref{fig:ortho-ex-complete}. 

\item If both of $\beta$'s anchors are on the $i$-band and both are visited, attach $\beta$ below the \emph{second} visited anchor (to minimize further relocations). 
See, for example, the $1$-beams labeled $\beta_{23}$ and $\beta_{24}$ in~\cref{fig:ortho-ex-complete}.
\item Otherwise, $\beta$ has one unvisited anchor $r$ on the $i$-band, and its second anchor $q$ on the $(i-1)$-band. 
By~\cref{lem:liri-out}, $q$ is visited and therefore already in the unfolding net. In this case, attach $\beta$ to $q$, and $r$ to $\beta$. 
\end{itemize}

\noindent
Additionally, for each unvisited $i$-band cell $u$ parallel to $L_i$ (if such cells exist) and
adjacent to a visited $(i-1)$-band cell $v$, attach $u$ to $v$ in the unfolding net. See, for example, the $2$-band cell labeled $u$ in~\cref{fig:ortho-ex-complete}, attached to the $1$-band. 

\begin{lemma}
If the $i$-band is nearly-visited, then at this point the unfolding net includes the entire surface of $O_i$.
\label{lem:nearly-visited}
\end{lemma}
\begin{proof}
Each beam in the partition of the top surface of $\O_i$ is attached to the net. Since their union covers the entire top surface, it is now part of the net---and the same holds for the bottom surface of $\O_i$. It remains to show that unvisited $i$-band cells are included. 
Because the $i$-band is nearly visited, $L_i$ and $R_i$ are orthogonal and connected by a straight segment $S$. Let $q \in S$ be arbitrary. 
\begin{itemize}
\squeezelist
\item If $S$ is parallel to $R_i$, then $i$-\beam($q$) 
is in the top partition of $\O_i$, anchored on a visited $i$-band cell (cf.~\cref{lem:liri-out}). 
Stage 2 attaches $q$ to $i$-beam($q$). 
\item If $S$ is parallel to $L_i$, then $(i-1)$-\beam($q$) is either empty or in the bottom partition of $\O_i$, anchored on a visited band cell. Stage 3 attaches $q$ to the adjacent $(i-1)$-band cell or $(i-1)$-\beam. 
\end{itemize}
This concludes the proof. 
\end{proof}

\paragraph{Unfolding the bottom of $\O_i$ when the $i$-band is not nearly-visited} In this case, $\beta$ is parallel to $(i-1)$-\beam($R_i$), which matches the orientations of the beams in the top partition of $\O_i$. Recall that the $(i-1)$-bridge is composed of beams parallel to $(i-1)$-\beam($L_i$). Thus, if $L_i$ is orthogonal to $R_i$, $\beta$ may cross the $(i-1)$-bridge. 

Assume first that $\beta$ does not cross the $(i-1)$-bridge. At this point, any unvisited $i$-band cell $r$ parallel to $R_i$ is positioned above $i$-\beam($r$) in the unfolding net (see~\cref{sec:top}). 
We attach $\beta$ to the unfolding net as follows:
\begin{itemize}
\squeezelist
\item If $\beta$ is anchored on an unvisited $i$-band cell $r$, then $r$ is parallel to $R_i$ and lies above $i$-\beam($r$) in the unfolding net. In this case, we attach $\beta$ above $r$, along their shared side. 
For example, see the bottom beams labeled $\beta_{41}$, $\beta_{42}$, $\beta_{51}$, $\beta_{52}$ and $\beta_{61}$ in~\cref{fig:ortho-ex-complete}.

\item If $\beta$ is anchored on a visited $i$-band cell $r$, and its second anchor is not on the $i$-band, we attach $\beta$ below $r$. See beam $\beta_{31}$ in~\cref{fig:ortho-ex-complete}, which is attached to its sole $3$-band anchor $w$ (above $w$, since $3$-pointer = $cw$). 

\item If both of $\beta$'s anchors are visited $i$-band cells, we attach $\beta$ below the \emph{second} visited $i$-band anchor $r$. This choice will be motivated further in~\cref{sec:allbands}. 
All unlabeled bottom beams in~\cref{fig:ortho-ex-complete} are examples of this case. 
\end{itemize}

\setlength{\intextsep}{0pt}%
\setlength{\columnsep}{1em}%
\begin{wrapfigure}[10]{r}{0.42\textwidth} 
\includegraphics[page=5,width=0.42\textwidth]{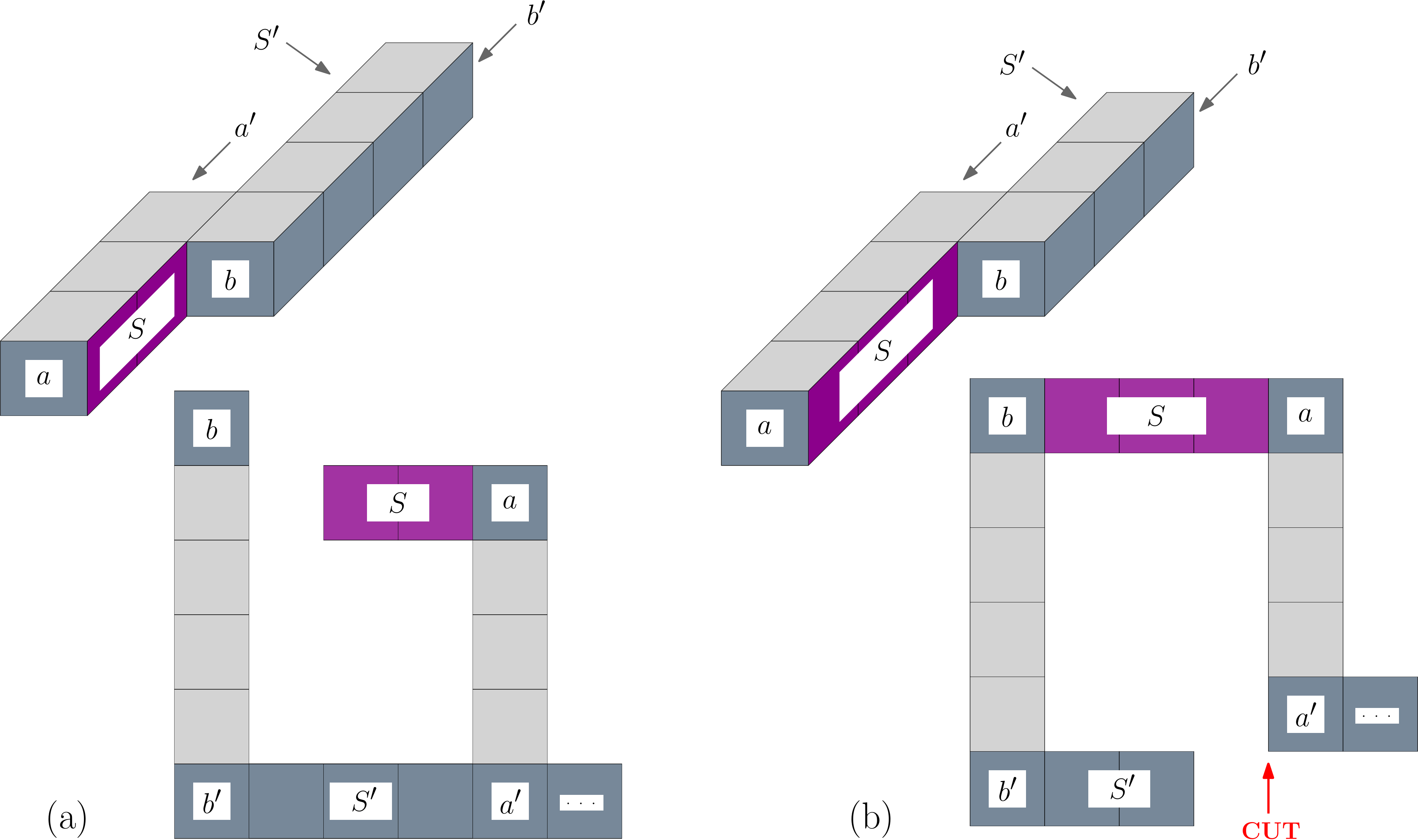}
\vspace{-2em}
    \caption{$\beta = (i-1)$-\beam($R_i$) crosses the $(i-1)$-bridge.} 
     \label{fig:ortho-bottom}
\end{wrapfigure}

Assume now that $\beta$ crosses the $(i-1)$-bridge. Refer to~\cref{fig:ortho-bottom}. By~\cref{lem:bridge-bot}, the $(i-1)$-bridge consists of a single strip (since it lies on a bottom face), and its intersection with $\beta$ is a single square piece. 
Let $\beta_1$ and $\beta_2$ be the two sub-\beams\ obtained by removing from $\beta$ the shared $(i-1)$-bridge section.
If both sub-\beams\ are empty, then  $\beta$ is entirely within the $(i-1)$-bridge and is already part of the unfolding net.  Otherwise, consider any non-empty sub-beam $\gamma \in \{\beta_1, \beta_2\}$: 
\begin{itemize}
\squeezelist
\item If $\gamma$ is anchored on an $i$-band cell $r$, then $r$ must be 
parallel to $R_i$ and thus already included in the unfolding net (see~\cref{sec:top}). 
In this case, we attach $\gamma$ to $r$: above if $r$ is unvisited, and below if $r$ is visited. 
For example, in~\cref{fig:ortho-bottom}, $\beta_1$ would be attached below $R_i$. 

\item Otherwise, $\gamma$ must be anchored on an $(i-1)$-band 
cell $q$. Since $\beta$ crosses the $(i-1)$-bridge, it is orthogonal to $(i-1)$-beam($R_{i-1}$), 
so $q$ is orthogonal to $R_{i-1}$. By~\cref{lem:liri-out}, $q$ is visited. 
In this case, we attach $\gamma$ to $q$ along their shared side (as is the case for $\beta_2$ in~\cref{fig:ortho-bottom}). 
\end{itemize}
Note that the entire bottom surface of $\O_i$ is now part of the unfolding net.

\subsection{Stage 4: Unfolding the Remaining Band Pieces of $\O$}  
\label{sec:allbands}%
At this stage, the unfolding net includes the top and bottom surfaces of $\O$, as well as all visited band cells and all $i$-band cells parallel to $R_i$, for each $i$. If the $i$-band is nearly-visited, the net covers the entire surface of $\O_i$ (cf.~\cref{lem:nearly-visited}).

For the cases where if the $i$-band is not nearly-visited, we complete the unfolding of $\O_i$ incrementally, for increasing $i$. 
Since $R_1$ and $L_1$ are adjacent, the entire surface of $\O_1$ is part of the unfolding net (as discussed in~\cref{sec:bands-bridges}).  

Fix $i > 1$ and assume that the entire surface of $\O_1, \ldots, \O_{i-1}$ has been incorporated in the unfolding net. 
Assume, without loss of generality, that $\O$ is oriented so that $R_i$ is a \emph{front} cell (with normal $+y$) and the $i$-$pointer$ is $ccw$ (the $cw$ case is symmetric). 
By~\cref{lem:front-allback}, all $i$-band back cells are visited, so the 
only surface pieces of $\O_i$ left to unfold are any unvisited left and right $i$-band cells. 
If there are no such cells,  the unfolding process for $\O_i$ is complete. 

Now, consider a \emph{straight} band segment $S = i$-band($a, b$) that contains unvisited left or right cells, 
where both $a$ and $b$ are parallel to $R_i$.
Let $S^* \subseteq S$ denote the unvisited portion of $S$. 
To unfold $S^*$, we consider each of the four exhaustive cases listed in~\cref{tab:bandunf}. 

\vspace{2em}
\begin{table}[htp]
\centering
\begin{tabular}{|c|c|c|}
\hline
\multicolumn{3}{|c|}{~~~~~{\bf Unfolding straight band segment $S^* \subseteq S = i$-band($a, b$)}~~~~~ }\\ 
\hline
{\bf $a$} & $b$ & unfolding procedure \\
\hline
\hline
unvisited & unvisited & \cref{sec:case1} \\
\hline
visited & unvisited & \cref{sec:case2} \\
\hline
visited & visited & \cref{sec:case3} \\
\hline
unvisited & visited & \cref{sec:case4} \\
\hline
\end{tabular}
\caption{Exhaustive case analysis for unfolding $S^*$.}
\label{tab:bandunf}
\end{table}

\vspace{1em}
\noindent
Let $a'$ and $b'$ denote the $i$-band cells opposite to $a$ and $b$, respectively. 

\subsubsection{Unfolding case 1: $a$ unvisited, $b$ unvisited.} 
\label{sec:case1}
In this case, $S^* = S$. 
By~\cref{lem:front-allback}, both $a$ and $b$ are front cells (since $R_i$ is assumed to be a front cell), and their opposite back cells $a'$ and $b'$ are visited. 

Let $S' = i$-band($b', a'$). Refer to~\cref{fig:unfops-1}. Note that $S'$ is a straight band segment orthogonal to $R_i$. Since both $a'$ and $b'$ are visited, $S'$ is entirely visited. Since $a$ is unvisited, $i$-\beam($a$) $\equiv i$-\beam($a'$) lies on top of $O_i$ (cf.~\cref{lem:liri-out}), and similarly for $b$ and $b'$. This, along with the $2$-manifold property of $\O$'s surface, implies that $i$-\beam($S'$) is non-empty and lies on the top of $\O_i$. 

\medskip
\begin{figure}[htp]
    \centering
    \includegraphics[page=1,width=0.8\textwidth]{figures/unfops} 
    \caption{Unfolding case 1: (a) $S$ no longer than $S'$ (b) $S$ longer than $S'$. }
    \label{fig:unfops-1} 
\end{figure}

\paragraph{Unfolding process}  
At this point, $a$ and $b$ are already in the unfolding net above $i$-\beam($a'$) and $i$-\beam($b'$), respectively (see~\cref{sec:top}). 
If $S$ is no longer than $S'$, we attach $S$ to $a$ (alternatively, to $b$). See~\cref{fig:unfops-1}(a). 
If $S$ is longer, we first cut the unfolding net along the left side of $a'$. (The cut could be along any vertical edge interior to $S'$, but for definiteness, we opt for the left edge of $a'$.) This cut disconnects the net into two components, which we then reconnect by attaching $S$ to both $a$ and $b$, so that $S$ unrolls horizontally between them. See~\cref{fig:unfops-1}(b).

\subsubsection{Unfolding case 2: $a$ visited, $b$ unvisited.} 
\label{sec:case2}
Given that $a$ (parallel to $R_i$) is visited and $b$ is unvisited, it follows that $a = R_i$ is a front cell. As $b$ remains unvisited, it must also be a front cell, because all back cells are visited (cf.~\cref{lem:front-allback}). Therefore, $S$ does not include $L_i$ and consequently $S^* = S$ (that is, the entire segment $S$ is unvisited). 

Since $b$ is unvisited, $i$-\clip($b$) lies on top of $\O_i$ and is not adjacent to $L_{i+1}$ (cf.~\cref{lem:liri-out}). %
Similarly, since $S$ is unvisited, $i$-\beam($S$) is non-empty and lies on top of $\O_i$. These together imply that the $i$-bridge lies on top of $\O_i$ and consists of a single beam, namely $i$-\beam($a$). This implies that the $(i+1)$-pointer is identical to the $i$-pointer ($ccw$, by our assumption). 

\vspace{1em}
\begin{figure}[htp]
    \centering
    \includegraphics[page=1,width=0.96\textwidth]{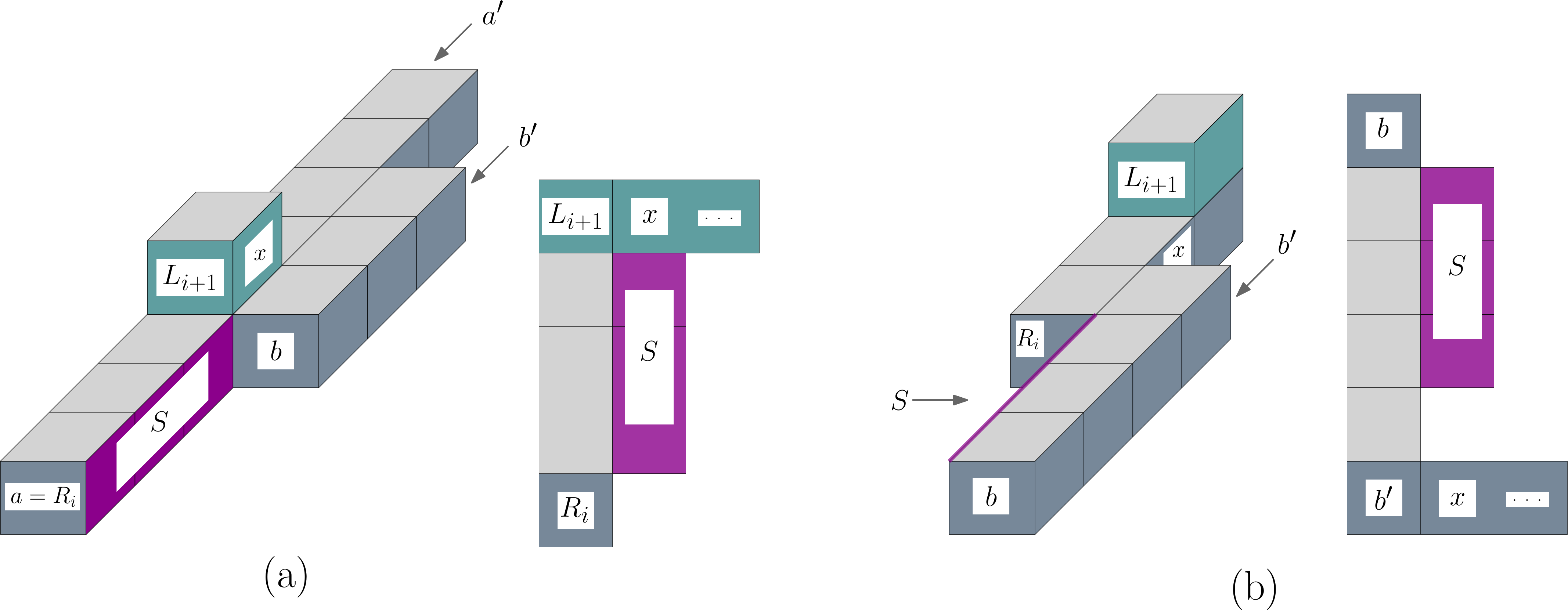} 
    \caption{Unfolding case 2: (a) $S$ facing \emph{right}  (b) $S$ facing \emph{left}.}
    \label{fig:unfops-1copy} 
\end{figure}

\paragraph{Unfolding process} If $S$ is a right segment, we attach $S$ to the right side of $i$-\beam($a$) in the unfolding net, along their shared side. See~\cref{fig:unfops-1copy}(a).
If $S$ is a left segment, we attach $S$ to the right 
side of $i$-\beam($b'$) in the unfolding net, along their shared side. See~\cref{fig:unfops-1copy}(b).

\subsubsection{Common scenario.} 
\label{sec:common}
Before discussing the next two cases, we highlight a scenario common to both. 
In this setting, $L_i$ is adjacent to an unvisited $i$-band cell $r$, the interior angle between $L_i$ and $r$ is $\pi/2$, and $S = S^*$ is the maximal straight $i$-band segment $S$ containing $r$. See~\cref{fig:commoncopy}(a). 
The process of unfolding $S$ requires a complex analysis, deferred to~\ref{sec:appendix-common} to maintain focus in this section.

\subsubsection{Unfolding case 3: $a$ visited, $b$ visited.} 
\label{sec:case3}
Given that $a$ (which is parallel to $R_i$) is visited and the segment $S$ (orthogonal to $R_i$) is not entirely visited, it follows that $a = R_i$ is a front cell. Since $b$ is visited, either $b = L_i$ or $S$ contains $L_i$. If $L_i \in S$, then the $i$-band is nearly-visited, in which case $S$ is already part of the unfolding net (cf.~\cref{lem:nearly-visited}.) 
Therefore, the discussion in this section assumes that $b = L_i$, so $S^* = S$. 
By~\cref{lem:front-allback}, $L_i$ cannot be opposite to $R_i$, therefore $L_i$ is also a front cell. 

\vspace{1em}
\begin{figure}[htp]
    \centering
    \includegraphics[page=3,width=0.88\textwidth]{figures/unfopscopy}
    \caption{(a) Common scenario. (b) Unfolding case 3, $S$ facing \emph{right}.}
    \label{fig:commoncopy}
\end{figure}

\paragraph{Unfolding process} 
The case where $S$ is a left segment matches the setting for the common scenario and its unfolding is detailed in~\ref{sec:appendix-common}. 
If $S$ is a right segment, we attach it to the right side of $i$-\beam($R_i$). See~\cref{fig:commoncopy}(b). 

\subsubsection{Unfolding case 4: $a$ unvisited, $b$ visited.} 
\label{sec:case4}
Given that $a$ is unvisited and $R_i$ is a front cell, $a$ must also be a front cell, because all back calls are visited (cf.~\cref{lem:front-allback}).
Moreover, $i$-\beam($a$) lies on top of $O_i$  and is anchored on the $i$-band only (cf.~\cref{lem:liri-out}). 
Since $b$ is visited, either $b = L_i$ (in which case $S^* = S$) or $S$ includes $L_i$.  
 


\begin{figure}[htp]
    \centering
    \includegraphics[page=2,width=0.94\textwidth]{figures/unfopscopy}
    \caption{Unfolding case 4: (a) $S$ facing \emph{left} (b) $S$ facing \emph{right}.}
    \label{fig:unfopscopy-4}
\end{figure}

\paragraph{Unfolding process} 
Our unfolding procedure depends on whether $S$ is a left or right segment. 
Assume first that $S$ a left segment. Then $b$ must be a front cell. 
If  $b = L_i$, this matches the common scenario detailed in~\ref{sec:appendix-common}.
If $b \neq L_i$, then $S$ includes $L_i$, so $S^* \subset S$. 
We willl show that $S^*$ is adjacent to $B = i$-\beam($b$). 
We first relocate $B$ to sit on top of $b$ in the unfolding net (if not already there), 
then attach $S^*$ to the left side of $B$. See~\cref{fig:unfopscopy-4}(a). 

Assume now that $S$ a right segment. In this case, $b$ may be a front cell, or a back cell identical to $a'$. If $b$ is a front cell, or a back cell distinct from $L_i$, we attach $S^*$ to the left side of the $i$-\beam($a'$), as shown in~\cref{fig:unfopscopy-4}(a). 
If $b =  L_i$ is a back cell, this again matches the common scenario detailed in~\ref{sec:appendix-common}.

\medskip
\noindent
Having exhausted all cases, the unfolding process is now complete.

\subsection{Complete unfolding example}
\cref{fig:ortho-ex-complete} shows the final unfolding net for the polycube example from~\cref{fig:defs-1}(a). All visited band segments extend horizontally, and top and bottom beams extend vertically above or below adjacent  band cells. 

\vspace{1em}
\begin{figure}[h!]
    \centering
    \includegraphics[page=1,width=\textwidth]{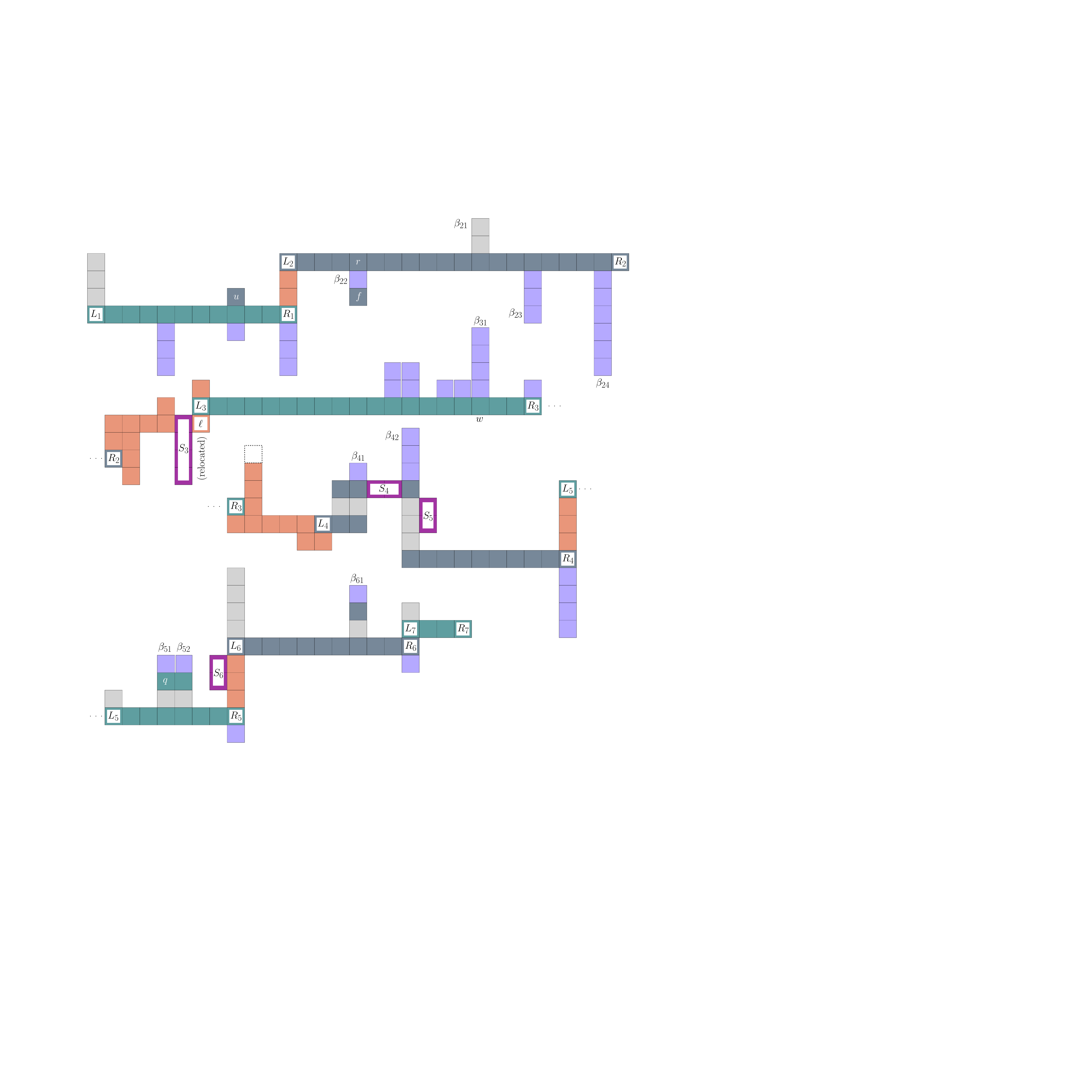}
    \caption{Complete unfolding of the polycube from~\cref{fig:defs-1}(a): horizontal visited band segments are connected by orange bridges; top beams are marked in light gray, and above some of them are unvisited band cells; bottom beams are marked in light purple; and $S$-labeled segments represent unvisited left or right band segments.}
    \label{fig:ortho-ex-complete}
\end{figure}

\noindent
Observe the following:
\begin{itemize}
\item The entire 1-band is visited, and the $1$-pointer is $ccw$.  The top and bottom of $\O_1$ are partitioned into beams parallel to $1$-\beam($R_1$), which are attached above and below the 1-band, respectively. 

\item The $2$-band is nearly-visited. There is a single top beam $\beta_{21}$ parallel to $2$-\beam($R_2$), which attaches above the $2$-band. The bottom of $\O_2$ is partitioned into beams parallel to $1$-\beam($L_2$), each attaching to an adjacent $2$-band cell, with ties broken in favor of the most recently visited cell (see $\beta_{22}$, $\beta_{23}$, and  $\beta_{24}$). 
The two unvisited $2$-band faces, $u$ and $f$, attach to the adjacent $1$-band cell and bottom $1$-beam, respectively. 

\item $L_3$ and $R_3$ are separated by a straight, unvisited $3$-band segment labeled $S_3$, and the $3$-pointer is $cw$. The 3-bridge covers the entire top surface of $\O_3$. The bottom of $\O_3$ is partitioned into beams parallel to $3$-\beam($R_3$), which are positioned above the 3-band (since the $3$-pointer is $cw$). 

From a point of view facing $R_3$, the configuration for $S_3$ is a horizontal mirror image of case 3 discussed in~\cref{sec:case3}. Note that the top cell $\ell$ adjacent to $L_3$, originally part of $3$-bridge, 
has been relocated below $L_3$ to facilitate the unfolding of $S_3$. 

\item For any $i > 3$, the $i$-band is not nearly-visited. Both the top and bottom of $\O_i$ are partitioned into beams parallel to $i$-\beam($R_i$). The top beams are positioned above the $i$-band, with any adjacent unvisited front $i$-band cells attached above them, and any adjacent bottom $(i-1)$-beams attached above those cells (see $\beta_{41}$, $\beta_{42}$, $\beta_{51}$, $\beta_{52}$, and $\beta_{61}$). 

Each bottom $(i-1)$-beam not adjacent to an unvisited $i$-band cell is attached to a visited $i$-band cell, with ties broken in favor of the most recently visited cell (see the beams attached below $R_4$, $R_5$ and $R_6$). 

\item From a point of view facing $R_4$, $S_4$ is a right segment matching the 
configuration of case 1 discussed in~\cref{sec:case1}, and $S_5$ is a left segment matching the configuration of case 2 discussed in~\cref{sec:case2}.
\item From a point of view facing $R_6$, $S_6$ is a left segment matching the configuration of case 4 discussed in~\cref{sec:case4}.
\end{itemize}
This example illustrates most of the subtle scenarios of our unfolding approach. 

\section{Conclusion}

We show that every polycube with orthogonally convex layers can be edge unfolded. 
A natural extension of this work would be to remove the orthogonal convexity constraint, thereby  
addressing the general polycube edge unfolding problem posed in~\cite{open-problems}. 

Another potential extension would be to use rectangular boxes instead of cubes as building blocks for the 
orthogonal polyhedron. Our unfolding algorithm works as-is for rectangular boxes, provided their height does 
not exceed their width and depth. 
Removing this height restriction would move us closer to the goal of unfolding arbitrary orthostacks, 
a challenge that remains open. 

\bibliographystyle{splncs04}
\bibliography{unfolding}

\begin{thebibliography}{10}
\providecommand{\url}[1]{\texttt{#1}}
\providecommand{\urlprefix}{URL }
\providecommand{\doi}[1]{https://doi.org/#1}

\bibitem{Bern-Demaine-Eppstein-Kuo-Mantler-Snoeyink-2003}
Bern, M., Demaine, E., Eppstein, D., Kuo, E., Mantler, A., Snoeyink, J.:
  Ununfoldable polyhedra with convex faces. Computational Geometry: Theory and
  Applications  \textbf{24}(2),  51--62 (February 2003)

\bibitem{BDDLOORW1998}
Biedl, T., Demaine, E., Demaine, M., Lubiw, A., Overmars, M., O'Rourke, J.,
  Robbins, S., Whitesides, S.: Unfolding some classes of orthogonal polyhedra.
  In: Proc. of the 10th Canadian Conference on Computational Geometry. pp.
  70--71. Montr\'eal, Canada (August 1998)

\bibitem{CST-2012}
Chambers, E.W., Sykes, K., Traub, C.M.: Unfolding rectangle-faced orthostacks.
  In: Proc. of the 24th Canadian Conference on Computational Geometry. pp.
  23--28 (2012)

\bibitem{Chang2015}
Chang, Y.J., Yen, H.C.: Unfolding orthogonal polyhedra with linear refinement.
  In: Proc. of the 26th International Symposium on Algorithms and Computation,
  ISAAC 2015, Nagoya, Japan. pp. 415--425. Springer Berlin Heidelberg (2015)

\bibitem{Damian-Demaine-Flatland-2014-delta}
Damian, M., Demaine, E., Flatland, R.: Unfolding orthogonal polyhedra with
  quadratic refinement: the {D}elta-unfolding algorithm. Graphs and
  Combinatorics  \textbf{30}(1),  125--140 (2014)

\bibitem{Damian-Demaine-Flatland-2017-O'Rourke-genus2}
Damian, M., Demaine, E., Flatland, R., O'Rourke, J.: Unfolding genus-2
  orthogonal polyhedra with linear refinement. Graphs and Combinatorics
  \textbf{33}(5),  1357--1379 (Sep 2017)

\bibitem{Damian-Flatland-21}
Damian, M., Flatland, R.: Unfolding polycube trees with constant refinement.
  Computational Geometry  \textbf{98},  101793 (2021)

\bibitem{Damian-Flatland-Meijer-O'Rourke-2005-orthotrees}
Damian, M., Flatland, R., Meijer, H., O'Rourke, J.: Unfolding well-separated
  orthotrees. In: Abstracts from the 15th Annual Fall Workshop on Computational
  Geometry. pp. 23--25 (November 2005)

\bibitem{Damian-Flatland-O'Rourke-2005-manhattan}
Damian, M., Flatland, R., O'Rourke, J.: Unfolding {M}anhattan towers. In: Proc.
  of the 17th Canadian Conference on Computational Geometry. pp. 211--214.
  Windsor, Canada (August 2005)

\bibitem{Damian-Flatland-O'Rourke-2007-epsilon}
Damian, M., Flatland, R., O'Rourke, J.: Epsilon-unfolding orthogonal polyhedra.
  Graphs and Combinatorics  \textbf{23}(1),  179--194 (2007)

\bibitem{Damian-Meijer-2004-orthostacks}
Damian, M., Meijer, H.: Edge-unfolding orthostacks with orthogonally convex
  slabs. In: Abstracts from the 14th Annual Fall Workshop on Computational
  Geometry. pp. 20--21. Cambridge, MA (November 2004)

\bibitem{DM24}
Damian, M., Meijer, H.: Edge-unfolding polycubes with orthogonally convex
  layers (2024), \url{https://arxiv.org/abs/2407.01326}

\bibitem{Demaine-O'Rourke-2007}
Demaine, E., O'Rourke, J.: Geometric Folding Algorithms: Linkages, Origami,
  Polyhedra. Cambridge University Press (July 2007)

\bibitem{DK-2023}
Demaine, E.D., Karntikoon, K.: Unfolding orthotubes with a dual hamiltonian
  path. Thai Journal of Mathematics  \textbf{21}(4),  1011--1023 (2023)

\bibitem{open-problems}
Demaine, E.D., Mitchell, J.S.B., O’Rourke, J.: The Open Problems Project
  \url{https://topp.openproblem.net}

\bibitem{Liou-Poon-Wei-2014-onelayer}
Liou, M.H., Poon, S.H., Wei, Y.J.: On edge-unfolding one-layer lattice
  polyhedra with cubic holes. In: The 20th International Computing and
  Combinatorics Conference (COCOON) 2014. pp. 251--262 (2014)

\bibitem{Josef-22}
Minar\'ic, J.: Unfolding some classes of one-layer polycubes. In: Proc. of the
  34th Canadian Conference on Computational Geometry. pp. 243--248. Toronto,
  Canada (August 2022)

\bibitem{Klara-2024}
Pernicov\'a, K.: Grid-edge unfolding orthostacks with rectangular slabs. In:
  Proc. of the 36th Canadian Conference on Computational Geometry. pp. 65--70
  (2024)

\end{thebibliography}

\newpage
\appendix
\noindent
\section{Structural Properties of Polycubes}
\label{sec:appendix-lemmas}

This section provides proofs for the lemmas stated in~\cref{sec:bands-bridges}, along with two new lemmas needed to establish the correctness of the unfolding approach.

\liriexist*
\begin{proof} 
We first show that $R_i$ and $L_{i+1}$ exist. If the $i$-band contains a cell $a$ adjacent to a cell $b$ in the $(i+1)$-band, then $i$-\clip($a$) is empty, making $a$ and $b$ valid candidates for $R_i$ and $L_{i+1}$, respectively (cf.~\cref{def:ri}). 
Otherwise, by~\cref{lem:opposite-edges}., there exists an $i$-beam anchored on some $a \in i$-band and $b \in (i+1)$-band.  
By~\cref{def:iclip}, $i$-\clip$(a)$ is adjacent to $b$, so $a$ and $b$ again serve as valid candidates for $R_i$ and $L_{i+1}$. 
Since $a$ lies on the walk around the $i$-band, both $R_i$ and $L_{i+1}$ exist. 

We now show that they are uniquely defined. By definition, $R_i$ is the last $i$-band cell (in the $i$-pointer direction) satisfying the conditions of~\cref{def:ri}, so it is unique. 
If $i$-\clip$(R_i)$ is empty, the adjacent $(i+1)$-band cell uniquely determines $L_{i+1}$.

If $i$-\clip$(R_i)$ is non-empty, we partition it into beams parallel to $i$-\beam($R_i$).
From $i$-\beam($R_i$), walk in the direction of the $i$-pointer, passing over all beams anchored solely on the $i$-band. 
Let $\beta$ be the {\bf last such beam} encountered. Immediately after $\beta$, by the definition of $R_i$, there must be a beam anchored on an $(i+1)$-band cell. 
By the orthogonal convexity of $\O_{i+1}$, there are at most two such beams. 
If only one exists, its $(i+1)$-band anchor uniquely defines $L_{i+1}$.  
If two such beams exist, we select the one whose $(i+1)$-band anchor is closer to $R_i$ in Manhattan distance, ensuring uniqueness. 
\end{proof}

\beamends*
\begin{proof}
Let $B$ be any beam on the surface of $\O_i$, anchored on band cells $a$ and $b$. 
If $B$ is a top $i$-\beam\ and both $a$ and $b$ are on the $(i+1)$-band, then an infinite extension of $B$ in a direction orthogonal to $a$ would intersect $\O_{i+1}$ into two disconnected components (separated by $B$), contradicting its orthogonal convexity.
Therefore, at least one of $a$ and $b$ must be on the $i$-band. 
Similar reasoning applies if $B$ is a bottom $(i-1)$-\beam. 
\end{proof}

\liriout*
\begin{proof}
We show that, if $u$ does not meet these conditions, it qualifies as a potential candidate for $R_i$. 
If $i$-\clip($u$) is empty, then $u$ is a valid candidate for $R_i$ (cf.~\cref{def:ri}). 
If $i$-\clip($u$) lies on the bottom of $\O_{i+1}$, so does $i$-\beam$(u)$, which is a subset of $i$-\clip$(u)$. By~\cref{lem:beam-ends}, $i$-\beam($u$) is anchored on an $(i+1)$-band cell $b$ parallel to $u$, making $u$ a valid candidate for $R_i$. The same holds if the final condition of the lemma is not met. 

Thus, in all cases, $u$ is a valid candidate for $R_i$. But then $u$ must be visited --- otherwise, the walk around the $i$-band would have continued past $R_i$ to reach $u$, contradicting the choice of $R_i$ from~\cref{def:ri}. This contradicts the lemma's assumption that $u$ is unvisited. 
\end{proof}

\bridgebot*
\begin{proof}
The $i$-bridge includes $i$-\beam($R_i)$. If $i$-\beam($R_i)$  is on the bottom of $\O_{i+1}$, then by~\cref{lem:beam-ends}, it is adjacent to an $(i+1)$-band cell parallel to $R_i$. By~\cref{def:ri}, this cell is precisely $L_{i+1}$, so $i$-\beam($R_i)$ is the bridge.
\end{proof}

\medskip
\noindent
The following lemma, whose proof is nontrivial, plays a key role in proving the correctness of the unfolding algorithm. 

\lemstrips*
\begin{proof}
By~\cref{lem:liri-out}, $i$-\beam$(u)$ is non-empty and lies on top of $\O_{i}$ (otherwise $u$ is visited). We prove this lemma by contradiction. Let $w \neq u$ be the other anchor of $i$-\beam$(u)$,  and assume,  
for the sake of contradiction, that $w$ is also unvisited. 
Note that $u$, $w$ and $R_i$ are parallel. 
Since $i$-\clip$(u)$ includes $i$-\beam$(w) \equiv i$-\beam($u$), it is adjacent to $w$. By~\cref{lem:liri-out}, $w$ is not on the $(i+1)$-band, therefore both $u$ and $w$ are on the $i$-band. 

Let $F$ be the $i$-face containing $i$-\beam($u$). By~\cref{lem:opposite-edges}, the boundary of $F$ is partitioned into two contiguous components: one component $\partial F_i$ adjacent to the $i$-band, and one component $\partial F_{i+1}$ adjacent to the $(i+1)$-band. Swap the labels $u$ and $w$, if necessary, so that a walk around the $i$-band in the direction of the $i$-pointer, starting at $u$, encounters $w$ before reaching $\partial F_{i+1}$. Refer to~\cref{fig:longband}(a). 
We prove the following property:

\medskip
\noindent
(a) $\partial F_{i+1}$ is a straight line segment orthogonal to $R_i$. 

\medskip
\noindent
Assume to the contrary that $\partial F_{i+1}$ is adjacent to an $(i+1)$-band cell $b$ parallel to $R_i$. By~\cref{lem:beam-ends}, $i$-\beam($b$) is anchored on an $i$-band cell $a$. Recall that $u$ and $w$ are positioned such that a walk in the direction of the $i$-pointer, starting at $w$, 
encounters $a$ before $u$. 
By definition, 
$i$-\clip($w$) includes $i$-\beam($a$) and is, therefore, adjacent to $b$. This, along with~\cref{lem:liri-out}, contradicts 
 our assumption that $w$ is unvisited. This settles property (a). 

\begin{center}
\begin{figure}[htp]
    \includegraphics[page=1,width=0.95\textwidth]{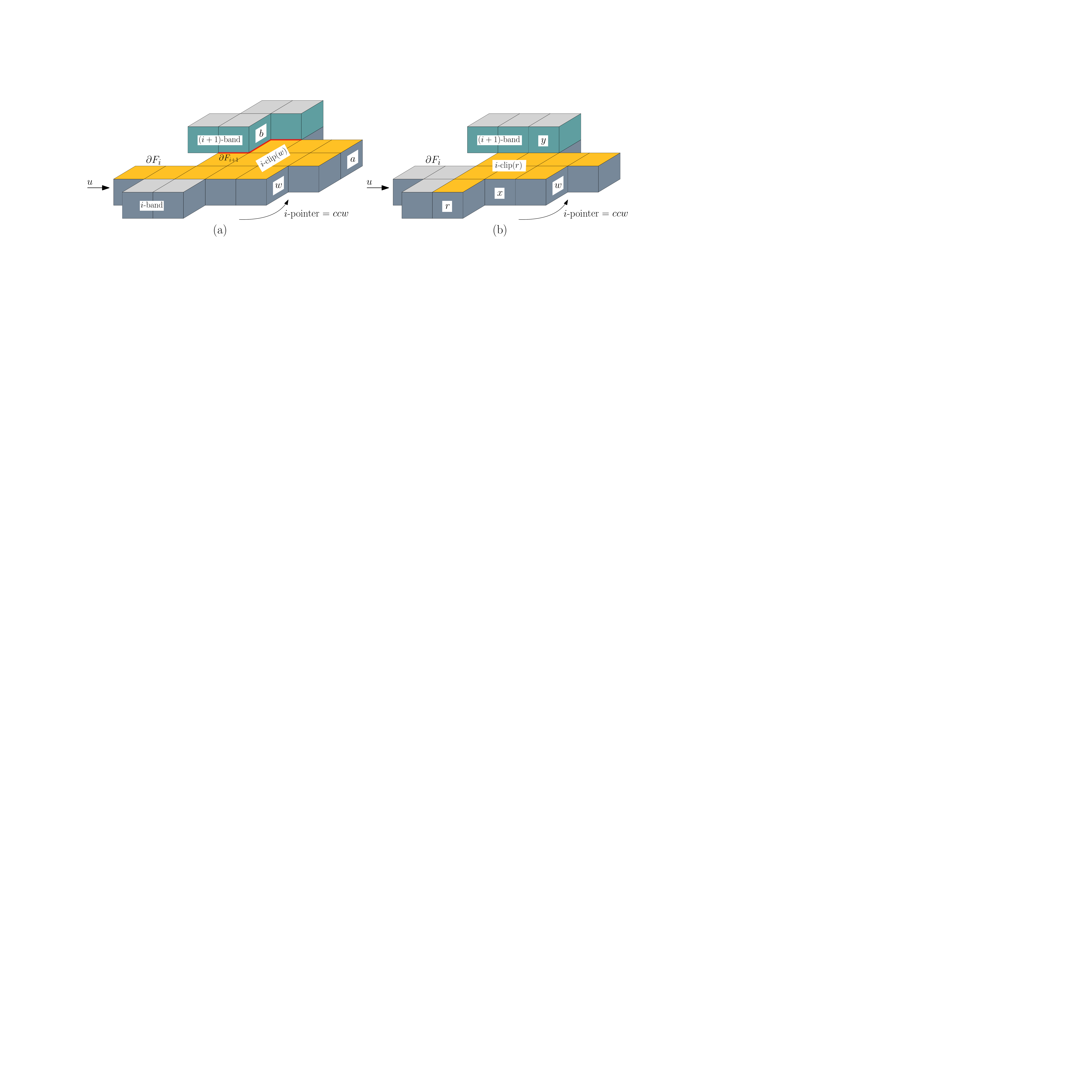} 
    \caption{\cref{lem:strips}: (a) $i$-\clip($w$) is adjacent to (i+1)-band cell $b$ parallel to $w$ (b) $i$-\clip($r$) is adjacent to $(i+1)$-band cell $y$ parallel to $r$.}
    \label{fig:longband}
\end{figure}
\end{center}

By property (a) above, $F$ is not adjacent to $L_{i+1}$ (which, by definition, is parallel to $R_i$). Consequently, $F$ is 
also not adjacent to $R_i$, since $R_i$ and $L_{i+1}$ are adjacent to the same $i$-face. 
Therefore, a walk around the $i$-band in direction of the $i$-pointer, starting at $R_i$, encounters $u$ before $w$. 
As both $u$ and $w$ are unvisited, $i$-band$[u, w]$ cannot contain $L_i$, so the entire $i$-band$[u, w]$ is unvisited. 
Furthermore, since $u$ and $w$ are parallel and orthogonal to $\partial F_{i+1}$, $i$-band$[u, w]$ must contain a cell $r$ parallel to $\partial F_{i+1}$. 

Let $y$ be any $(i+1)$-band cell adjacent to $F$, and let $x \neq y$ be the other anchor of $i$-\beam($y$). Refer to~\cref{fig:longband}(b). By~\cref{lem:beam-ends}, $x$ lies on the $i$-band. By~\cref{lem:liri-out}, $x$ is visited. 
Since $R_i$ is not adjacent to $F$, a walk in the direction of the $i$-pointer around the $i$-band, starting from $L_i$, encounters $x$, $R_i$ and $r$ in that order.  Since $r$ and $x$ are adjacent to $F$ and share the same normal, and $R_i$ is not adjacent to $F$, $x$ must be adjacent to $i$-\clip($r$). This implies that $i$-\clip($r$) includes $i$-\beam($x$), which is anchored on $y$, contradicting the fact that $r$ is unvisited. This concludes the proof. 
\end{proof}

\medskip
\noindent
Note that~\cref{lem:strips} does not always hold for $i$-band cells orthogonal to $R_i$. In~\cref{fig:longband2}(a) for example, opposite  cells $u$ and $w$ orthogonal to $R_i$ are unvisited. 

\vspace{1em}
\begin{figure}[htp]
\centerline{\includegraphics[page=2,width=\textwidth]{figures/longband}}
\caption{(a) \cref{lem:strips} does not hold for $i$-band cells orthogonal to $R_i$: opposite cells $u$ and $w$ are both unvisited. (b) \cref{lem:front-allback}: $a$ must be visited.}
\label{fig:longband2}
\end{figure}

\frontallback*
\begin{proof}
Assume, without loss of generality, that $R_i$ is a front cell and the $i$-\emph{pointer} is $ccw$; the other cases are symmetric. 
To prove the first part of the lemma, we show that all $i$-band back cells are visited. Suppose, for contradiction, that there is an unvisited $i$-band back cell. Since $R_i$ is a front cell and the traversal is $ccw$, the \emph{rightmost} back cell $r$ must be unvisited. By~\cref{lem:strips}, $i$-\beam($r$) is anchored on a visited $i$-band cell. It follows that $R_i$ must be the \emph{rightmost} front $i$-band cell, and thus $i$-\clip($R_i$) coincides with $i$-\beam($R_i$), which is not anchored on the $(i+1)$-band---contradicting the definition of $R_i$.


\medskip
We now prove the second part of this lemma. Suppose, for contradiction, that $L_i$ and $R_i$ are opposite to each other. 
We have already established that all $i$-band cells with normals opposite to that of $R_i$ are visited. 
Together, these imply that $R_i$ and $L_i$ are the rightmost front and back cells, respectively, 
and thus the straight $i$-band $S$ segment connecting $R_i$ to $L_i$ is entirely unvisited. 
By~\cref{def:ri}, either $R_i$ or $i$-\beam($R_i$) is anchored on an $(i+1)$-band cell $b$ parallel to $R_i$. Refer to~\cref{fig:longband2}(b). 
By the 2-manifold property of the surface of $\O$, the unvisited segment $S$ must include a cell $a$ that shares its top front vertex with $b$. 
Hence, $i$-\beam($a$) is either empty or located on the bottom of $\O_{i+1}$. This implies that either $b$ or $i$-\beam($b$) is anchored on an $(i+1)$-band cell. By~\cref{lem:liri-out}, cell $a$ must be visited. This contradicts the fact that $a$ is part of the unvisited segment $S$. 
\end{proof}

\begin{lemma}
The $i$-bridge is connected and orthogonally convex. 
\label{lem:bridge-convex}
\end{lemma}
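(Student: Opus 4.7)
The plan is to dispatch connectedness quickly and then tackle orthogonal convexity by combining the minimality clause of~\cref{def:ri} with the orthogonal convexity of the layers $\O_i$ and $\O_{i+1}$.

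Connectedness is immediate from the definition: by~\cref{def:ri}, the $i$-bridge is a consecutive sub-sequence of beams $B_1 = i$-\beam($R_i$), $B_2, \ldots, B_k = i$-\beam($L_{i+1}$) in a partition of the connected region $i$-\clip($R_i$) into beams parallel to $i$-\beam($R_i$), and two consecutive beams in such a partition necessarily share at least a cell edge. For orthogonal convexity, I would orient coordinates so that each $B_l$ is the vertical strip at column $x = x_l$ with $y$-range $[y_l^L, y_l^R]$, where $x_1, \ldots, x_k$ are consecutive integers in the $i$-pointer direction. A line parallel to the $y$-axis meets at most one beam (one segment); a line parallel to the $z$-axis meets the planar bridge in at most one point. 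The only nontrivial case is a line $y = c$, for which I must show that if it meets $B_j$ and $B_{j'}$ with $j < j'$, then it also meets every $B_l$ with $j \le l \le j'$.

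Assuming without loss of generality that the bridge lies on a top $i$-face $F$ (the bottom case is symmetric), write $\pi_k$ for the $xy$-projection of $\O_k$. The key structural observation is: for every $l < k$, the column of $\pi_{i+1}$ at $x_l$ is empty. Indeed, if it were not, the back-facing $(i+1)$-band cell at the back boundary of $\pi_{i+1}$ in that column would be parallel to $R_i$ and adjacent to $i$-\clip($R_i$), yielding a candidate for $L_{i+1}$ reachable with strictly fewer beams, contradicting the minimality clause of~\cref{def:ri}. Hence for $l < k$, $B_l$ coincides with the full column of $\pi_i$ at $x_l$, namely $[a_i(x_l), b_i(x_l)]$, where $a_i, b_i$ denote the $y$-extents of $\pi_i$ at that column; so $c \in [y_l^L, y_l^R]$ iff $x_l$ belongs to $\pi_i$'s row at $y = c$.

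Let $I$ denote $\pi_i$'s row at $y = c$, a contiguous $x$-interval by orthogonal convexity of $\pi_i$. If $j < j' < k$, then $x_j, x_{j'} \in I$ yields every intermediate $x_l \in I$, giving $c \in [y_l^L, y_l^R]$. If $j' = k$, then $(x_k, c) \in B_k \subseteq \pi_i$ gives $x_k \in I$, and together with $x_j \in I$ this forces every intermediate $x_l \in I$, so $c \in [y_l^L, y_l^R]$ for $j \le l \le k-1$, while $c \in [y_k^L, y_k^R]$ holds by hypothesis. The main obstacle is the structural observation itself: making precise that the hypothesized back-facing $(i+1)$-band cell at an intermediate column is indeed a valid competitor for $L_{i+1}$---in particular, that it is adjacent to $i$-\clip($R_i$) and not disqualified by some overhang where $\pi_{i+1}$ extends beyond $\pi_i$. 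This should follow from the orthogonally convex layer hypothesis together with~\cref{lem:large-face,lem:opposite-edges}, but requires a careful local case analysis.
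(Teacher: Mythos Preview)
Your approach is essentially the paper's: use the minimality clause of \cref{def:ri} to pin down the structure of the intermediate beams, then invoke orthogonal convexity of $\O_i$. The difference lies in how you phrase the key structural observation, and that phrasing is precisely what generates the ``main obstacle'' you flag.

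You claim that for every $l<k$ the column of $\pi_{i+1}$ at $x_l$ is empty. This is stronger than necessary and is what forces you to worry about overhangs where $\pi_{i+1}$ extends beyond $\pi_i$. The paper instead argues directly that each intermediate beam $B_l$ (for $l<k$) is anchored on the $i$-band only: if $B_l$ had an $(i+1)$-band anchor $b$, then $b$ is \emph{by definition} adjacent to $B_l\subseteq i\text{-\clip}(R_i)$ and parallel to $R_i$, so $b$ is a valid competitor for $L_{i+1}$ reached with strictly fewer beams, contradicting minimality. No overhang analysis is needed, because you never look at $\pi_{i+1}$ globally---only at the anchors of the beams actually appearing in the clip. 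Once both anchors of $B_l$ are on the $i$-band, orthogonal convexity of $\O_i$ gives that $B_l$ is the full column of $\pi_i$ at $x_l$, and your remaining argument goes through verbatim.

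One smaller point: your connectedness argument (``consecutive beams in a partition of a connected region share an edge'') is not valid for arbitrary connected regions---a simply connected orthogonal polygon can have a column that is disconnected from the next. It becomes true here only after you know the intermediate beams are full columns of the orthogonally convex $\pi_i$; then consecutive full columns must overlap in $y$-range, and the final beam $B_k\subseteq\pi_i$ fits against $B_{k-1}$ by the same row-interval argument you give for orthogonal convexity. The paper folds connectedness into the same step rather than treating it separately.
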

\begin{proof}
If the $i$-bridge consists of a single beam $i$-\beam($R_i$), the lemma trivially holds. Suppose now that the $i$-bridge contains multiple beams. 
Except for $i$-\beam($L_{i+1}$), all beams in the bridge must be anchored exclusively on the $i$-band. Indeed, if any other  beam were anchored on an $(i+1)$-band cell $b \neq L_{i+1}$, then $b$ would be a better candidate for $L_{i+1}$---hav ing fewer beams between $i$-\beam($R_i$) and $i$-\beam($b$)---contradicting~\cref{def:ri}. 

This, combined with the orthogonal convex of $\O_i$, implies that the portion $\beta$ of the $i$-bridge formed by removing $i$-\beam($L_{i+1}$) is itself connected and orthogonally convex. Since $i$-\beam($L_{i+1}$) lies entirely within the boundary of $\O_{i}$ (as it shares an $i$-face with $\beta$), reattaching it to $\beta$ maintains both connectivity and orthogonal convexity. The lemma follows.
\end{proof}



\begin{lemma}
If the $i$-bridge consists of multiple beams, all $i$-band cells parallel to $R_i$ are visited. 
\label{lem:bridge-visited}
\end{lemma}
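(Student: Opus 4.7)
The plan is to argue by contradiction: assuming an unvisited $i$-band cell $u$ parallel to $R_i$ exists, I will derive a conflict with the assumption that the bridge has multiple beams. After normalizing to the case where $R_i$ is a front cell and the $i$-pointer equals $ccw$ (the other cases are symmetric), \cref{lem:front-allback} forces $u$ to also be front. Writing $x_0$ for the column of $R_i$ and $x_L$ for the column of $L_{i+1}$, I will use \cref{lem:bridge-bot} to place the bridge on a top $i$-face $F$ and \cref{lem:bridge-convex} to show it occupies every integer column in $[x_L, x_0]$, with $x_L < x_0$.

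The key step is to prove that for every $x \in [x_L, x_0]$, the front $i$-band cell $f_x$ at column $x$ is visited. For $x \in (x_L, x_0]$, the proof of \cref{lem:bridge-convex} tells us the bridge beam at column $x$ has both anchors on the $i$-band, so its front anchor is necessarily $f_x$; for $x = x_L$, \cref{lem:beam-ends} similarly supplies an $i$-band anchor of $i$-\beam$(L_{i+1})$, which must again be $f_x$. I will then observe that $i$-\clip$(f_x)$ extends from $f_x$'s beam in the $-x$ direction within $F$ and is therefore adjacent to $L_{i+1}$; since $L_{i+1}$ is an $(i+1)$-band cell parallel to $f_x$, the contrapositive of \cref{lem:liri-out}(b) forces $f_x$ to be visited. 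Consequently $u$ must lie at a column $x_u < x_L$ or $x_u > x_0$.

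I will rule out $x_u > x_0$ using \cref{lem:front-allback}: the walk from $L_i$ to $R_i$ must contain every back cell, and in $ccw$ cyclic order this forces the walk to continue through the rightmost right cells and descend through every front cell back to $R_i$, so every front cell at a column $\ge x_0$ is visited.

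The remaining case $x_u < x_L$ will be handled by a short case analysis on the type of $L_i$ (front, left, or right --- $L_i$ cannot be back by \cref{lem:front-allback}). In each surviving position, the walk from $L_i$ traces out a contiguous run of front cells descending from the column of $L_i$ (or the column immediately reached after any initial run of left/right cells) down to the leftmost front column; the previously established visitedness of $f_{x_L}$ forces this starting column to be at least $x_L$, so any front cell at a column $< x_L$ is covered. The main technical obstacle is the bookkeeping of the $ccw$ traversal for each position of $L_i$, but in each subcase the two constraints (all back cells visited, and all bridge-anchor front cells visited) combine to pin $L_i$ into a narrow enough position for the walk to cover every front cell at a column $\le x_L$, producing the required contradiction.
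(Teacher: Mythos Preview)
Your step 3 is exactly the heart of the paper's argument: any front $i$-band cell anchored on an $i$-bridge beam has its $i$-\clip\ adjacent to $L_{i+1}$, so by the contrapositive of \cref{lem:liri-out}(b) it is visited. But the paper then finishes in one line, and you miss this shortcut. Namely, since the bridge has at least two beams, the front $i$-band cell $f$ at column $x_0-1$ is anchored on the bridge (it is the front anchor of the second bridge beam, which by the proof of \cref{lem:bridge-convex} lies on the $i$-band), and in the $ccw$ cyclic order $f$ is the \emph{first} front cell after $R_i$. Once this single cell $f$ is known to be visited, the entire unvisited segment $i$-band$(R_i,L_i)$ contains no front cell at all; combined with \cref{lem:front-allback} it contains no back cell either, and the lemma follows immediately. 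No column-by-column bookkeeping is needed.

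Your steps 4 and 5, by contrast, are where the proposal becomes shaky. The assertion in step 4 that the walk ``continues through the rightmost right cells and descends through every front cell back to $R_i$'' presupposes that the front cells of an orthogonally convex layer appear in monotonically decreasing $x$-order along the $ccw$ boundary. This is true, but it is a structural fact about HV-convex polyominoes that you neither state nor justify, and without it the sentence is just a picture. Step 5 is worse: the ``short case analysis on the type of $L_i$'' is only gestured at, and the claim that ``the walk from $L_i$ traces out a contiguous run of front cells descending from the column of $L_i$'' is not obviously compatible with $L_i$ being, say, a left cell sitting in the middle of the left staircase. Both steps can likely be repaired using the monotone-$x$ ordering of front cells, but as written they are gaps---and more importantly, they are entirely avoidable once you notice that knowing $f_{x_0-1}$ is visited already forces the unvisited arc to be front-cell-free.
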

\begin{proof}
Assume, without loss of generality, that $R_i$ is a front cell and the $i$-pointer is $ccw$; the other cases are symmetric. In this scenario, $i$-\beam($R_i$) and $i$-\beam($L_{i+1}$) are the leftmost and rightmost $i$-bridge beams, respectively. By~\cref{def:ri}, $i$-\clip($R_i$) includes the entire $i$-bridge. Moreover, for any front cell $f$ anchored on the $i$-bridge, $i$-\clip($f$) is adjacent to $L_{i+1}$, as it includes the portion of the $i$-bridge bounded by $i$-\beam($f$) and $i$-\beam($L_{i+1}$). It follows that the front cell $f \neq R_i$ first encountered in a $ccw$ walk from $R_i$ must be visited. Consequently, all front cells are visited. By~\cref{lem:front-allback}, all back cells are visited. This completes the proof. 
\end{proof}


\newpage
\section{Common unfolding scenario} 
\label{sec:appendix-common}
The setting for the shared unfolding scenario is as follows: $L_i$ is adjacent to an unvisited $i$-band cell $r$, the interior angle formed by $L_i$ with $r$ is $\pi/2$, and $S = S^*$ is the maximal straight $i$-band segment $S$ containing $r$ (orthogonal to $L_i$). We also assume, without loss of generality, that $i$-pointer = $ccw$ (the case with $i$-pointer = $cw$ is symmetric), so $S$ is a \emph{left} segment, as shown in~\cref{fig:commoncopy}(a). We discuss the process of unfolding $S$. 

\medskip
\noindent
To simplify our discussion, reorient $\O$ (if necessary) such that $L_i$ is a front cell. 
Since $S$ is unvisited, $i$-\beam($S$) is non-empty, lies on top of $\O_i$, and is anchored on the $i$-band only (cf.~\cref{lem:liri-out}). 
If $R_i$ is parallel to $S$, then $S$ has already been added to the the unfolding net in Stage 2 of the unfolding algorithm (see~\cref{sec:top}). 
Therefore, the discussion in the section is based on the assumption that {\bf $R_i$ is parallel to $L_i$}, and consequently 
the $i$-band is not nearly-visited (by definition). 

The unfolding process for $S$ depends on whether the cell $\ell \in i$-\beam($L_i$) adjacent to $L_i$ is attached to $L_i$ in the unfolding net. 
In some cases, this cell is pivotal to attaching $S$ to the net. Therefore, we first attempt to relocate $\ell$ by attaching it to $L_i$ in the unfolding net. 

\paragraph{Relocating $\ell$} Let $L = i$-\beam($L_i$). 
If $L$ is not part of the $i$-bridge, it sits on top of $L_i$ in the unfolding net (see~\cref{sec:top}), so 
no relocation is necessary. If $L$ is part of the $i$-bridge, detaching $\ell$ from $L$ may potentially disconnect the $i$-bridge. 
One such example is depicted in~\cref{fig:unfops-common-5}, where the $i$-bridge covers the entire $i$-face. 
If removing $\ell$ from the $i$-bridge disconnects it, we leave  $\ell$ in place. 
Otherwise, we detach $\ell$ from $L$ and reattach it to $L_i$ in the unfolding net. 

\paragraph{Unfolding process} To determine suitable placements for $S$ that avoid overlap, we examine the structure of the $(i-1)$-bridge.
Let $B = (i-1)$-\beam($L_i$). By~\cref{def:ri}, $B$ is part of the $(i-1)$-bridge. 
Our analysis covers four distinct scenarios, summarized in Table~\ref{tab:common-scenarios}.

\vspace{1em}
\begin{table}[ht]
\centering{
\begin{tabular}{|c|c|c|c|}
\hline
{\bf Common} &  \multirow{2}{*}{{\bf Net cell $\ell$}} & \multirow{2}{*}{{\bf Beam $B$}} &  \multirow{2}{*}{{\bf $(i-1)$-bridge}} \\
{\bf Scenario} & ~ & ~ & ~ \\
\hline
\hline
1 &  \multirow{4}{*}{connected to $L_i$} & empty, or & single beam \\
\cline{1-1} \cline{4-4} 
2 & ~ & on top of $\O_{i-1}$ & multiple beams \\ 
\cline{1-1} \cline{3-4} 
 \multirow{2}{*}{3} & ~ & non-empty, and  &   \multirow{2}{*}{~} \\
~ & ~ & on the bottom of $\O_i$ &~\\ 
\hline
4 & disconnected from $L_i$ & \multicolumn{2}{c|}{~} \\
\hline
 \end{tabular}
}
\caption{Common unfolding scenarios.} 
\label{tab:common-scenarios}
\end{table}

\vspace{1em}
To simplify our discussion, we introduce some notation. 
Let $f$ be the front  $(i-1)$-band cell adjacent to $B$ (or to $L_i$, if $B$ is empty). 
Let $y$ be the $(i-1)$-band cell $cw$ from $f$, and $Y = (i-1)$-\beam($y$). Let $y' \neq y$ be the other anchor of $Y$. 

 
\paragraph{Common scenario 1} In this scenario, $\ell$ sits on top of $L_i$ in the unfolding net (after relocation, if necessary); 
$B$ is empty or on the top of $\O_{i-1}$; and the $(i-1)$-bridge is either empty (if $B$ is empty) or consists of a single beam, namely $B$.  
Consequently, $f = R_{i-1}$, $(i-1)$-pointer = $ccw$, and $y$ is attached to the left of $R_{i-1}$ in the unfolding net. 

\vspace{1.5em}
  \begin{figure}[hpt]
    \centering
    \includegraphics[page=2,width=0.98\textwidth]{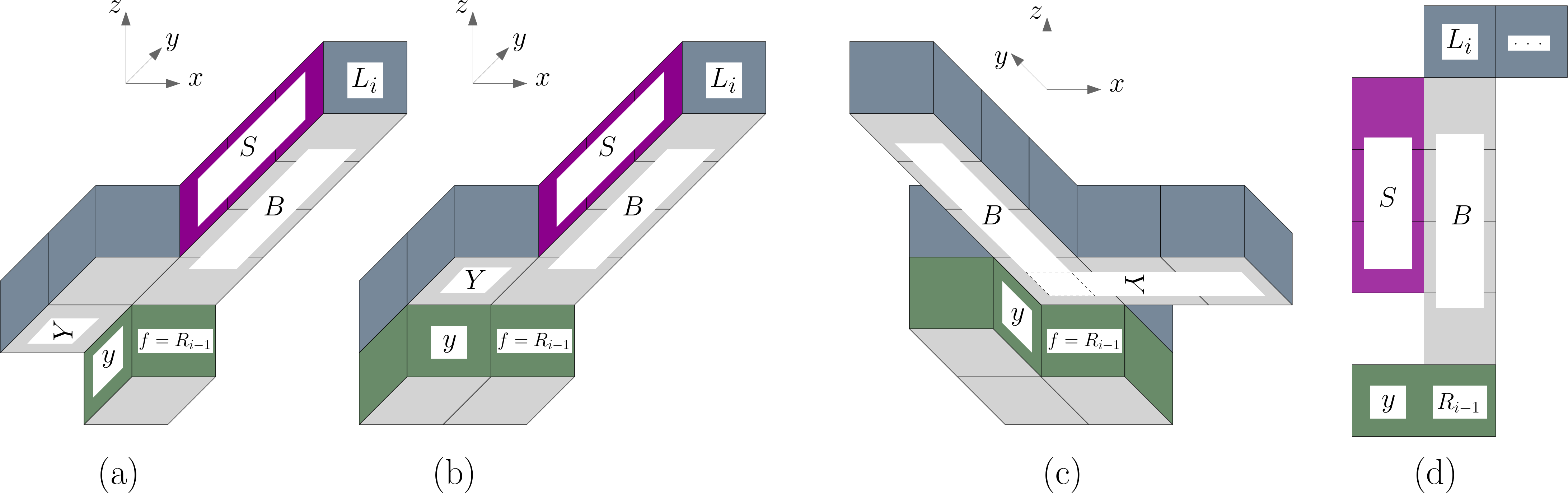}
    \caption{Common scenario 1: $\ell$ is attached to $L_i$ in the unfolding net; $B = (i-1)$-\beam($L_i$) is empty or on top of $\O_{i-1}$, and is identical to the $(i-1)$-bridge; $y$ is $cw$-adjacent to $R_{i-1}$; and $Y = (i-1)$-\beam($y$). (a) $Y$ is not in the partition of $\O_{i-1}$ (b) $Y$ is in the partition of $O_{i-1}$ but not adjacent to $S$ (c) $Y$ is in the partition of $O_{i-1}$,  adjacent to $S$.}
    \label{fig:unfops-common-2}
\end{figure} 

\vspace{1em}
Since $S$ is a left segment and $\O_i$ is orthogonally convex, $Y$ is empty or located on top of $\O_{i-1}$ (see~\cref{fig:unfops-common-2}). 
Recall that the top of $\O_{i-1}$ is partitioned into beams parallel to $(i-1)$-\beam($R_{i-1})$, which are parallel to $B$ (since $R_{i-1}$ is parallel to $L_i$, by definition). 
We attach $S$ to the unfolding net as follows:
\begin{itemize}
\item If $Y$ is not a beam in this partition---i.e., if $y$ is orthogonal to $R_{i-1}$, as shown in~\cref{fig:unfops-common-2}(a)---we attach $S$ to the left side of $\ell$. 
\item If $Y$ is a beam in this partition (meaning that $y$ is parallel to $R_{i-1}$), but it is not adjacent to $S$, we 
again attach $S$ to the left side of $\ell$. See~\cref{fig:unfops-common-2}(b). Note that this scenario is only possible if $B$ is non-empty. 
\item Otherwise, we cut the unfolding net along the vertical line separating $y$ and $R_{i-1}$, and then reconnect the two net components by attaching $S$ to both $Y$ and $\ell$. 
See~\cref{fig:unfops-common-2}(c). 
\end{itemize}

\begin{figure}[hpt]
    \centering
    \includegraphics[page=7,width=0.95\textwidth]{figures/unfops-common}    
       \caption{Common scenario 2: $\ell$ is connected to $L_i$ in the unfolding net, $B$ is non-empty and on top of $\O_{i-1}$, the $(i-1)$-bridge consists of multiple beams, $T$ and $S$ are not adjacent, and (a) $h \neq R_{i-1}$ or (b) $h = R_{i-1}$.}
    \label{fig:unfops-common-4}
\end{figure}

\paragraph{Common scenario 2} In this scenario $\ell$ sits on top of $L_i$ in the unfolding net (after relocation, if necessary); 
$B$ is on the top of $\O_{i-1}$; and the $(i-1)$-bridge consists of multiple beams (one of which is $B$, which must be 
non-empty).
Let $T$ be the $(i-1)$-bridge beam adjacent to $B$. 
One side observation is that, if $y$ is parallel to $R_{i-1}$, then $T$ is identical to $Y$ (as shown in~\cref{fig:unfops-common-42}). 

The unfolding procedure for this scenario mirrors that of the unfolding scenario 1: 
if $T$ is not adjacent to $S$, attach $S$ to the left side of $\ell$, as shown in~\cref{fig:unfops-common-4}. Otherwise, 
split the $(i-1)$-bridge along the vertical line separating $T$ and $B$, then reconnect the two net components by attaching $S$ to both $T$ and $\ell$, as illustrated in~\cref{fig:unfops-common-42}(a).

\vspace{1em}
\begin{figure}[hpt]
    \centering
    \includegraphics[page=3,width=0.95\textwidth]{figures/unfops-common}    
       \caption{(a) Common scenario 2: $\ell$ is connected to $L_i$ in the unfolding net, $B$ is non-empty and on top of $\O_{i-1}$, the $(i-1)$-bridge consists of multiple beams, and $T$ and $S$ are adjacent. (b) Common unfolding scenario 3: $\ell$ is connected to $L_i$ in the unfolding net, $B$ is non-empty and on the bottom of $\O_i$, and $Y$ is on top of $\O_{i-1}$.}
    \label{fig:unfops-common-42}
\end{figure}

\vspace{1em}
\paragraph{Common scenario 3} In this scenario $\ell$ sits on top of $L_i$ in the unfolding net (after relocation, if necessary), and 
$B$ is non-empty and on the bottom of $\O_i$. 
By~\cref{lem:bridge-bot}, $B$ is identical to the $(i-1)$-bridge. 
Consequently, $f = R_{i-1}$, $(i-1)$-pointer = $ccw$, and $y$ is attached to the left of $R_{i-1}$ in the unfolding net. 

We attach $S$ to the net as follows. If $Y$ is on top of $\O_i$, it must be adjacent to $S$, so the unfolding can proceed as in common scenario 1: cut the unfolding net along the vertical line separating $y$ and $R_{i-1}$, then reconnect the two resulting net 
components by attaching $S$ to $Y$ and $\ell$. See~\cref{fig:unfops-common-42}(b). 
If $Y$ is empty or on the bottom of $\O_i$, attach $S$ to the left side of $B$, as shown in~\cref{fig:unfops-common-1}. 

\vspace{1.5em}
\begin{figure}[htp]
    \centering
    \includegraphics[page=1,width=0.9\textwidth]{figures/unfops-common}
    \caption{Common scenario 3: $B$ is non-empty and on the bottom of $\O_{i}$, $Y$ is on the bottom of $\O_i$, and $y$ faces (a) left, (b) front, or (c) right; (d) $S$ attaches to $B$ in the net.}
    \label{fig:unfops-common-1}
\end{figure}


\vspace{1em}
\paragraph{Common scenario 4} In this scenario $\ell$ cannot be relocated above $L_i$ in the unfolding net, without disconnecting the net. 
This can only happen if $i$-\beam($L_i$) is part of the $i$-bridge. By definition, $i$-\beam($R_i$) is also part of the $i$-bridge. 
By~\cref{lem:front-allback}, $L_i$ may not lie opposite to $R_i$, therefore $i$-\beam($L_i) \neq i$-\beam($R_i$). Consequently, 
the $i$-bridge consists of multiple beams. By~\cref{lem:bridge-visited}, all cells parallel to $R_i$ are visited. 
Since $R_i$ and $L_i$ are parallel (by the common scenario assumption), they must be connected by $S$. Refer to~\cref{fig:unfops-common-5}. 

\vspace{1em}
\begin{figure}[htp]
    \centering
    \includegraphics[page=5,width=0.95\textwidth]{figures/unfops-common}
    \caption{Common scenario 4: $\ell$ is disconnected from $L_i$ in the net, $z$ is $ccw$-adjacent to $R_i$, and $i$-\beam($z$) is (a) empty, (b) on the bottom of $\O_i$, 
    (c) on top of $\O_{i-1}$.}
    \label{fig:unfops-common-5}
\end{figure}

\vspace{1em}
Let $g$ and $h$ be the back $i$-band cells opposite $L_i$ and $R_i$, respectively. Note that both are visited, since $L_i$ and $R_i$ are front cells. 
Let $z \in S$ be the $i$-band cell adjacent to $R_i$. We unfold $S$ as follows:
\begin{itemize}
\squeezelist
\item If $(i-1)$-beam($z$) is empty, then $z$ is adjacent to an $(i-1)$-band cell $u$ with $(i-1)$-beam($u$) also empty. In this case, attach $z$ to $u$. If $S = z$, the unfolding is complete. Otherwise, relocate $R = (i-1)$-\beam($R_i$) by attaching it to $h$ in the unfolding net, 
and then attach $S \setminus z$ to the $i$-bridge. See~\cref{fig:unfops-common-5}(a). 
Since $S$ is a left segment connecting $R_i$ and $L_i$,  the piece $S \setminus z$ ends up directly below $R_i$, in the space previously held by $R$. 

\item If $(i-1)$-beam($z$) is on the bottom of $\O_i$, then $G = (i-1)$-beam($g$) is adjacent to $S$ (by the orthogonal convexity of $\O_{i-1}$). In this case, attach $S$ to $G$ in the unfolding net, as shown in~\cref{fig:unfops-common-5}(b)

\item If $(i-1)$-beam($z$) is on the top of $\O_{i-1}$, there is a top $(i-1)$-face cell $x$ adjacent to both $z$ and $R_i$. See~\cref{fig:unfops-common-5}(c). First, relocate $x$ to sit below $R_i$ in the unfolding net. Next, cut the $i$-bridge along the line separating $i$-\beam($R_i$) and $i$-\beam($L_i$). Finally, reconnect the two resulting net components by attaching $S$ to both $x$ and $\ell$.
\end{itemize}
\medskip
\noindent
This wraps up the common unfolding scenario. 

\section{Proof of Correctness}
\label{sec:proof}
This section is dedicated to proving the correctness of the unfolding algorithm described in~\cref{sec:unfolding}. Each proof section references the notation used in the corresponding algorithm section. 

Before delving into the proof details, we introduce two definitions. 
A cell $r$ or a beam $B$ in the unfolding net is \emph{up-vacant} (\emph{down-vacant}) if the vertical space above (below) it in the unfolding net, 
bounded by the lines passing through its vertical sides, is vacant. 
A cell $r$ in the unfolding net is termed \emph{relocatable} if its removal from the net does not disconnect it. 

The correctness of Stages 1 and 2 of the unfolding algorithm is straightforward. Stage 1 unfolds visited band segments horizontally and the connecting bridges vertically. By~\cref{lem:bridge-convex}, the result is a single connected piece that extends monotonically in  the horizontal direction without overlap. At the beginning of Stage 2, each visited $i$-band cell other than $R_i$ is up-vacant, so attaching a beam $\beta$ above a visited anchor $r \neq R_i$ does not result in overlap. If $\beta$'s other anchor $q$ is an unvisited $i$-band cell, attaching $q$ above $\beta$ simply extends $\beta$ vertically without creating overlap. Next we establish the correctness of Stages 3 and 4, which require more complex reasoning. 

\subsection{Proof of Correctness --- Stage 3} 
\label{sec:proof-stage3}
In this section we prove the correctness of the unfolding process for the bottom surface of $\O_i$, for a fixed $i \ge 1$, described in~\cref{sec:bottom}. 
At the beginning of this stage, all visited $i$-band cells in the unfolding net, except for $L_i$, are down-vacant. 
Also, any unvisited i-band cell $r$ parallel to $R_i$ sits on top of $i$-\beam($r$) in the unfolding net (as described in~\cref{sec:top}). 
Our unfolding procedure attaches a bottom beam $\beta$ either:
\begin{enumerate}
\squeezelist
\item [(a)] below a visited $i$-band anchor $r$, or 
\item [(b)] above an unvisited $i$-band anchor $r$ parallel to $R_i$, or 
\item [(c)] above or below a visited $(i-1)$-band anchor $q$, depending on the $ccw$ or $cw$ direction of the  $(i-1)$-pointer. 
\end{enumerate}

\noindent
Attaching $\beta$ below a visited $i$-band anchor $r$ does not create an overlap, since $r$ is down-vacant. 
Attaching $\beta$ above an unvisited $i$-band anchor $r$ simply extends an existing vertical strip vertically, so no overlap occur. 
Now consider the case where $\beta$ gets attached to a visited $(i-1)$-band anchor $q$. 
Since $\beta = (i-1)$-\beam($q$) is not on the surface of $\O_{i-1}$, $q$ is not attached to any strip in the top partition 
of $\O_{i-1}$ (see discussion in~\cref{sec:top}). 
This implies that the place where $\beta$ would attach to $q$ in the unfolding net (above or below $q$, depending on the $ccw$ or $cw$ direction of the  $(i-1)$-pointer) is vacant. This allows us to attach $\beta$ to $q$ without causing overlap. 

In the case where the $i$-band is not nearly-visited, subsequently attaching to $\beta$ its unvisited $i$-band anchor $r$ simply extends $\beta$ vertically without creating overlap. Similar arguments apply for the case where an unvisited $i$-band cell $u$ attaches to an adjacent $(i-1)$-band cell $v$. 

\medskip
\noindent
The following definition is used in~\cref{lem:net-properties} to establish properties of the unfolding net at the end of Stage 3.

\begin{definition}
A beam $\beta$ on the bottom of $\O_i$ anchored on the $(i-1)$-band is called \emph{stable} if it either does not intersect the $(i-1)$-bridge or, if it does, the remaining portion---after removing the shared section---is empty or consists of a single sub-beam anchored on the $i$-band. 
Otherwise, $\beta$ is called \emph{unstable}.
\label{def:stable}
\end{definition}

For example, in the scenario depicted in~\cref{fig:ortho-bottom}, $(i-1)$-\beam($R_i$) is unstable, because the portion not shared with the $(i-1)$-bridge consists of two separate components, $\beta_1$ and $\beta_2$. In contrast, $(i-1)$-\beam($u$) is stable, because the section not shared with the $(i-1)$-bridge forms a single component (labeled $\delta_1$) anchored on the $i$-band. 
Our unfolding algorithms maintains stable $(i-1)$-beams attached to the $i$-band, whereas unstable $(i-1)$-beams get split between the $i$-band and the $(i-1)$-band.

\medskip
\noindent
Next, we prove several key properties of the current unfolding net, which are essential to proving the correctness of the subsequent unfolding steps.

\vspace{1em}
\begin{lemma} [{\bf Net Properties}] At the end of Stage 3, the unfolding net satisfies the following properties, for each $i$:
\begin{enumerate}[leftmargin=3.5em]
\item [{\boldmath $\P1(i)$}:] For any visited $i$-band cell $r \neq L_i$ anchored on $B = (i-1)$-\beam($r$), if one of the following holds:
\begin{itemize}
\squeezelist
\item [(a)] $B$ is empty, or non-empty but not in the bottom partition of $\O_i$, or 
\item [(b)] $B$ is in the bottom partition of $\O_i$, does not cross the $(i-1)$-bridge, and both of its anchors lie on the $i$-band. If both anchors are visited, then $r$ is first visited. Additionally, if the $i$-band is nearly-visited, both anchors are visited.
\end{itemize}
then $r$ is down-vacant (up-vacant) based on the $ccw$ ($cw$) direction of the $i$-pointer. 
%
\item [{\boldmath $\P2(i)$}:] For any visited $i$-band cell $r \neq R_i$, if $B = i$-\beam($r$) is empty, 
then $r$ is up-vacant (down-vacant), unless attached to an unvisited $(i+1)$-band cell $u$ 
when the $(i+1)$-band is nearly-visited, in which case $u$ is up-vacant 
(down-vacant), based on the $ccw$ ($cw$) direction of the $i$-pointer. 
If $B$ is not empty and if either:
\begin{itemize}
\squeezelist
\item [(a)] $B$ is on top of $\O_i$ but not in the partition of $\O_i$, or 
\item [(b)] $B$ is a stable beam on the bottom of $\O_{i+1}$ and, if the $(i+1)$-band is nearly-visited, 
$B$'s anchor on the $(i+1)$-band is visited, 
\end{itemize}
then $r$ is up-vacant (down-vacant) based on the $ccw$ ($cw$) direction of the $i$-pointer. Moreover, if:
\begin{itemize}
\squeezelist
\item [(c)] $B$ is in the top partition of $\O_i$, not anchored on $L_{i+1}$ or an unvisited $i$-band cell, 
\end{itemize}
then $B$ is up-vacant (down-vacant) based on the $ccw$ ($cw$) direction of the $i$-pointer. 
%
%
\item [{\boldmath $\P3(i)$}:] For any visited $i$-band cell $r$ adjacent to an $(i+1)$-band cell $u$, the cell space immediately above (below) $r$, 
based on the $ccw$ ($cw$) direction of the $i$-pointer, is either vacant or occupied by $u$. 

\item [{\boldmath $\P4(i)$}:]  If an $i$-bridge beam $B$ is not anchored on $L_{i+1}$, then $B$ is up-vacant (down-vacant) 
based on the $ccw$ ($cw$) direction of the $i$-pointer. 
\item [{\boldmath $\P5(i)$}:]  If an $i$-bridge beam $B$ is not anchored on $R_i$, then $B$ is down-vacant (up-vacant) 
based on the $ccw$ ($cw$) direction of the $i$-pointer. 
\item [{\boldmath $\P6(i)$}:] {\bf Relocation.} If $B = (i-1)$-\beam($R_i$) is not part of the $(i-1)$-bridge:
\begin{itemize}
\squeezelist
\item [(a)] If $B$ is in the top partition of $\O_{i-1}$ and $x \in B$ adjacent to $R_i$, and if either $B \setminus x$ is non-empty 
or $B$ is not anchored on $L_{i-1}$, then $x$ is relocatable.
\item [(b)] If $B$ is in the bottom partition of $\O_i$ and both of its anchors are on the $i$-band, then $B$ is relocatable. 
\item [(c)] Moreover, if $Y$ is a beam in the top partition of $\O_i$, not part of the $i$-bridge and not anchored on $L_i$, and if both of its anchors are visited $i$-band cells, then $Y$ is relocatable. 
\end{itemize}

\end{enumerate}
\label{lem:net-properties}
\end{lemma}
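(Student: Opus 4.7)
The plan is to verify the six properties in turn, using induction on $i$ where needed (Stage 3 processing of $\O_{i+1}$ attaches bottom beams that lie in the $i$-plane and can therefore affect the space above visited $i$-band cells, so some cross-layer reasoning is unavoidable). Throughout I would assume without loss of generality that the $i$-pointer is $ccw$; the $cw$ case is symmetric. The overarching method is to track, for each visited $i$-band cell and each bridge beam, everything that Stages 1, 2, and 3 could possibly attach in the relevant adjacent position, and then exploit the case hypotheses in each $P_k(i)$ to show that the attachment rule never conflicts with the claimed vacancy or relocatability.

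For $P1(i)$, a visited $i$-band cell $r \neq L_i$ receives no attachment below it during Stages 1 or 2, since those stages only build upward from $i$-band cells. The only Stage 3 operation that can place a cell directly below $r$ is the attachment of a beam from the bottom partition of $\O_i$. Under condition (a), the beam $B = (i-1)$-\beam($r$) is either empty or is not in that partition, so nothing is placed below $r$. Under (b), $B$ is in the partition and has both anchors on the $i$-band, but the Stage 3 rule attaches $B$ below the \emph{second} visited anchor encountered along the $i$-pointer walk; since $r$ is visited first by hypothesis, $B$ is attached below $r$'s partner, leaving $r$ down-vacant. For $P2(i)$ the argument is dual: the cell above $r$ can only be occupied by (i) a Stage 2 beam in the top partition of $\O_i$ anchored at $r$, (ii) a Stage 3 bottom beam of $\O_{i+1}$ attached to $r$, or (iii) an unvisited $(i+1)$-band neighbor $u$ of $r$ in the quasi-adjacent subcase. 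Each of the sub-conditions (a), (b), (c), together with the empty-$B$ special case, precisely excludes all but the allowed possibility, and a short check of the Stage 2/3 attachment rules (plus~\cref{lem:bridge-bot} for the stable-beam case) controls what sits above the allowed attachment. $P3(i)$ then follows by direct inspection of the rules from~\cref{sec:bands}, \cref{sec:top}, and~\cref{sec:bottom}: the only cells that can occupy the space above a visited $r$ adjacent to an $(i+1)$-band cell $u$ are $u$ itself (attached in Stage 3 under quasi-adjacency) or a beam already controlled by $P2(i)$.

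For $P4(i)$ and $P5(i)$, the $i$-bridge is laid out in Stage 1 with $i$-\beam($R_i$) placed directly above $R_i$ and $i$-\beam($L_{i+1}$) placed directly below $L_{i+1}$, the intermediate beams sitting between them. By~\cref{lem:bridge-convex} the bridge is orthogonally convex, and by inspection of the Stage 2 and Stage 3 rules, nothing is ever attached to a bridge beam that is not anchored on $R_i$ or $L_{i+1}$ (the rules quantify only over non-bridge beams). This immediately gives both vacancies. For $P6(i)$, the idea is that each candidate cell or beam is effectively a leaf in the tree structure of the unfolding net built so far: whatever hangs off $x$ (resp.\ $B$ or $Y$) under the hypothesis of (a)--(c) is either nothing, or a subtree that can be reattached through a previously proved vacancy at a neighbouring cell. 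Relocatability then reduces to verifying that the candidate cell was attached to the net via a single edge and that any dependent descendants still have an anchor route to the net after the removal.

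The main obstacle is the combinatorial explosion in $P1$, $P2$, and especially $P6$: each sub-condition combines the beam's location (top vs.\ bottom), the choice of partition (which depends on whether $L_i$ and $R_i$ are quasi-adjacent), the visitation status of each anchor, and whether the beam crosses the $(i-1)$-bridge. The most delicate clause is $P6(a)$: one must show that when $B$ is not anchored on $L_{i-1}$ or when $B \setminus x$ is non-empty, the single cell $x$ can be detached without severing a chain of attachments that Stage 2 may have built using $x$ as an intermediate anchor. Handling this cleanly requires invoking the previously proved vacancy properties $P2$ and $P3$ at the neighbouring cells to re-route any affected attachments, and it is where the bulk of the case analysis will live.
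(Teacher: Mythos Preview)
Your treatment of $\P1$ through $\P5$ is essentially the paper's own argument: for each visited cell $r$ (or bridge beam) one lists every Stage~1/2/3 rule that could place something in the relevant adjacent position and then checks that the hypotheses of the property exclude each such rule. No induction on $i$ is needed, however; the paper proves all six properties for arbitrary $i$ by direct inspection of the attachment rules, and the cross-layer interaction you mention (Stage~3 on $\O_{i+1}$ placing bottom beams above $i$-band cells) is handled directly inside the $\P2$ case analysis rather than via an inductive hypothesis.

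Where you go astray is $\P6$. You anticipate a subtree hanging off the candidate cell $x$ (or beam $B$, $Y$) that would have to be ``re-routed through a previously proved vacancy,'' and you flag this as the main difficulty. But relocatability is \emph{defined} as: removing the cell does not disconnect the net. If anything were attached to $x$, then $x$ would simply fail to be relocatable---re-routing is not part of the definition and is not available. The paper's argument is therefore much simpler than you expect: in each of (a), (b), (c) the candidate is shown to be a \emph{leaf} of the attachment structure. For $\P6(a)$, the beam $B=(i-1)$-\beam$(R_i)$ lies in the top partition of $\O_{i-1}$, anchored on a visited $(i-1)$-band cell $r$ (by \cref{lem:beam-ends,lem:liri-out}) and on $R_i$; Stage~2 attaches $B$ above $r$, and since the other anchor $R_i$ is not an unvisited $(i-1)$-band cell, nothing further is attached to $B$. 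Hence all of $B$---and in particular the end cell $x$ adjacent to $R_i$---is a leaf. For $\P6(b)$ one uses \cref{lem:front-allback} to see that both $i$-band anchors of $B$ are visited, so Stage~3 attaches $B$ below $R_i$ with nothing further on it; $\P6(c)$ is the Stage~2 analogue. The ``combinatorial explosion'' you fear in $\P6$ is thus illusory: the work was already done in designing the Stage~2/3 attachment rules so that these beams never receive further attachments.
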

\begin{proof}
We prove these properties for arbitrary $i \ge 0$. 
When $i$-pointer = $ccw$, the top and bottom edges of the $i$-band correspond to top and bottom edges in the unfolding net, respectively. As a result, the top (bottom) surface of $\O_i$ attaches above (below) the $i$-band in the unfolding net.
%
When $i$-pointer = $cw$, the bottom edges of the $i$-band correspond to the top edges in the unfolding net, so the top (bottom) surface of $\O_i$ attaches below (above) the $i$-band in the unfolding net. We prove this lemma for the $ccw$ direction of the $i$-pointer; the $cw$ direction is symmetric.

We first prove the net property $\P1$. Let $r \neq L_i$ be a visited $i$-band cell, and let $B = (i-1)$-\beam($r$). 
Note that, if non-empty, $B$ can be either on top of $\O_{i-1}$ or on the bottom of $\O_i$. 
\begin{itemize}
\squeezelist
\item If on top of $\O_{i-1}$, then $B$ is not a beam in the partition of $\O_i$. Furthermore, $B$ does not get attached to $r$ in the unfolding net, because the top of $\O_{i-1}$ exclusively attaches to the $(i-1)$-band, as discussed in Stage 2 (\cref{sec:top}).
\item Otherwise, if on the bottom of $O_i$ but not in the partition of $\O_i$, then $B$ is not one of the beams processed in Stage 3 (\cref{sec:bottom}) and therefore it does not get attached to $r$ in the unfolding net. 
\item Otherwise, $B$ is a beam in the bottom partition of $\O_i$. Assume that $B$ satisfies the conditions of $\P1b(i)$: it does not cross the $(i-1)$-bridge,  its second anchor $q \neq r$ is also on the $i$-band, and if $q$ is visited, it is visited after $r$. Furthermore, if the $i$-band is nearly-visited, $q$ must be visited.

If $q$ is unvisited, then the $i$-band cannot be nearly-visited. In this case, $r$, $i$-\beam($r$), $q$ and $B$ extend vertically above $r$ in that order, ensuring that $r$ remains down-vacant.  

If $q$ is visited, then $B$ attaches to the \emph{second} visited anchor $q$ (not $r$) in Stage 3 (\cref{sec:bottom}). Hence, $r$ again remains down-vacant. 
\end{itemize}
In all  scenarios, $r$ is down-vacant. Therefore, property {\boldmath $\P1(i)$} holds. 

Next, we prove property $\P2$. Let $r \neq R_i$ be a visited $i$-band cell, and let $B = i$-\beam($r$). 
If $B$ is empty, then $r$ is adjacent to an $(i+1)$-band rectangle $u$. 
In this case, $r$ is up-vacant unless the $(i+1)$-band is nearly-visited and $u$ is unvisited, causing $u$ to attach directly above $r$ 
(see cell labeled $u$ in~\cref{fig:ortho-ex-complete}). 
However, nothing else attaches above $u$, so property {\boldmath $\P2(i)$} holds in this case. 

Assume now that $B$ is non-empty. Then $B$ is located either on top of $\O_{i}$ or on the bottom of $\O_{i+1}$: 
\begin{itemize}
\squeezelist
\item If $B$ is on top of $\O_i$ but not in the partition of $\O_i$, then it was not processed in Stage 2 (\cref{sec:top}) and does not attach to $r$ in the unfolding net. Therefore, $r$ remains up-vacant, and property {\boldmath $\P2a(i)$} holds. 
\item Consider now the case where $B$ is a \emph{stable} beam located on the bottom of $\O_{i+1}$. 
If $B$ is not in the partition of $\O_{i+1}$, then it does not attach to $r$ in the unfolding net. 
Otherwise, by~\cref{def:stable}, the portion $B' \subseteq B$ not shared with the $i$-bridge is a single component anchored on the $(i+1)$-band, and  attaches to that band in Stage 3 (\cref{sec:bottom}). Thus, $B$ does not attach to $r$ in the unfolding net, $r$ remains up-vacant, 
and property {\boldmath $\P2b(i)$} holds. 
\item Finally, assume that $B$ is a top beam in the partition of $\O_i$.  
In Stage 2 (\cref{sec:top}), the second anchor $r' \neq r$ of $B$ attaches above $B$ only if either $r'$ is an unvisited $i$-band cell or $r' = L_{i+1}$. Since  these scenarios are excluded by {\boldmath $\P2c(i)$}, $B$ remains up-vacant and property {\boldmath $\P2c(i)$} holds. 
\end{itemize}

\noindent
Regarding property $\P3$, if $r$ is adjacent $(i+1)$-band cell $u$, then $u$ attaches above $r$ 
only if it is unvisited and the $(i+1)$-band is nearly-visited.  
In all other cases, nothing attaches above $r$, therefore property {\boldmath $\P3(i)$} holds. 

%
Properties {\boldmath $\P4(i)$} and {\boldmath $\P5(i)$} follow immediately from the fact that the only band cells attached to the $i$-bridge in the unfolding net are $R_i$ and $L_{i+1}$.

Finally, we turn to property  {\boldmath $\P6(i)$}. Consider $B = (i-1)$-\beam($R_i$) and assume that $B$ is not part of the $(i-1)$-bridge. If $B$ is in the top partition of $\O_{i-1}$, then $B$ is anchored on an $(i-1)$-band cell $r$ (cf.~\cref{lem:beam-ends}). By~\cref{lem:liri-out}, $r$ is visited. Our unfolding procedure attaches $B$ to $r$ in the unfolding net in Stage 2 (\cref{sec:top}). Since $B$ is not anchored on an unvisited $(i-1)$-band cell, no other surface pieces get attached to $B$, so removing $B$ from the net does not disconnect it. This implies that the entire beam $B$ is relocatable, and therefore $x \in B$ adjacent to $R_i$ is also relocatable. So property {\boldmath $\P6a(i)$} holds. The additional constraints on $B$ are required in the inductive proof for Stage 4. 

If $B$ is in the bottom partition of $\O_i$ and both of its anchors are on the $i$-band, then both anchors are visited, as one of them is $R_i$ and the $i$-band cell opposite $R_i$ is visited (cf.~\cref{lem:front-allback}). In this situation, $B$ gets attached to $R_i$ in Stage 3 (\cref{sec:bottom}), and nothing else gets attached to $B$ thereafter. Therefore, $B$ is relocatable and property {\boldmath $\P6b(i)$} holds. Similarly, if $Y$ is a top beam in the partition of $\O_i$ but not part of the 
$i$-bridge, and both of its anchors are visited $i$-band cells other than $L_i$, then $Y$ gets attached to the second visited anchor in Stage 2 (\cref{sec:top}). 
Furthermore, no other surface pieces get attached to $Y$. It follows that $Y$ is relocatable, thus establishing property {\boldmath $\P6c(i)$}. 
\end{proof}

\subsection{Proof of Correctness --- Stage 4}
\label{sec:proof-stage4}
In this section we prove the correctness of Stage 4 of the unfolding process described in~\cref{sec:allbands}. 
Two non-overlapping surface pieces of $\O$ are considered \emph{incident} if their boundaries share at least one cell vertex.
Note that while adjacent surface pieces (which share a cell edge) are necessarily incident, the converse does not always hold: 
incident pieces may share only a vertex and not an edge. %
In Stage 4, the unfolding algorithm completes the unfolding of $\O_i$ incrementally, for increasing $i$. Recall the assumption that $R_i$ is a front face and the $i$-pointer is $ccw$. 
We will show by induction on $i$ that our unfolding process maintains the following local invariant. 

\newpage
\noindent
{\bf Local Invariant $\I(i)$:} Prior to processing $\O_{i}$, the following net properties hold: 

\medskip
\noindent
\begin{tabular}{ll}
$\I1(i): ~\P1(j)$ for &  ($j = i-1$, $r = R_{i-1}$, and the $(i-1)$-band is nearly-visited),  \\
 & ($j = i$ and $r$ not adjacent to $L_i$), and ($j > i$). \\
$\I2(i):~\P2(j)$ for &  ($j = i-1$ and $r$ visited just before $R_{i-1}$), and ($j \ge i$). \\
$\I3(i):~\P3(j)$ for &  ($j \ge i-1$). \\
$\I4(i):~\P4(j)$ for &  ($j \ge i-1$). \\
$\I5(i):~\P5(j)$ for &  ($j = i-1$ and $R_{i-1}$ adjacent to $L_{i-1}$),  \\
				& ($j = i-1$ and $(i-1)$-\beam($R_{i-1}$) not incident to $L_{i-1}$), \\
				& and ($j \ge i$). \\
$\I6(i):~\P6(j)$ for &  ($j \ge i$). \\
\end{tabular}

\medskip
\medskip
\noindent
Note that, as we advance to higher layers, the invariant imposes stricter restrictions on lower levels. 
The second applicability constraint for $\I5$ may seem subtle, but it is straightforward: 
it simply excludes the case where $(i-1)$-\beam($R_{i-1}$) is incident to $L_{i-1}$. This 
occurs when $L_{i-1}$ and $R_{i-1}$ share the same normal, 
are connected by a straight $i$-band segment, and 
$R_{i-1}$ is in front of $L_{i-1}$ when looking into the normal for $R_{i-1}$. 
In particular, $\I5(i+1)$ excludes the case depicted in~\cref{fig:unfops-4}. 

By~\cref{lem:net-properties}, at this point the invariant $\I(i)$ holds, for each $i$. 
Since $R_1$ and $L_1$ are adjacent, the entire surface of $\O_1$ is part of the unfolding net (as discussed in~\cref{sec:bands-bridges}).  
The unfolding algorithm incrementally completes the unfolding of $\O_i$, for $i = 2, 3, \ldots$. For each $i$, we use the Invariant $\I(i)$ and 
prove that, upon completing the unfolding of $\O_i$, the result is a single, non-overlapping, connected net, and the Invariant $\I(i+1)$ continues to hold. 

\subsubsection{Proof of Correctness --- Stage 4, Case 1. }  
This section references the notation used in~\cref{sec:case1} and~\cref{fig:unfops-1}. 
The context for this case is that both $a$ and $b$ are unvisited. 
Since $S'$ is orthogonal to $R_i$ and the top of $\O_i$ is partitioned into beams parallel to $i$-\beam($R_i)$, each cell in $S'$ satisfies the conditions of the net property $\P2a(i)$. 
By Invariant $\I2(i)$ this property holds, therefore each cell in $S'$ is up-vacant. 
If $S$ is no longer than $S'$, then $S$ fits within the vacant space above $S'$ without overlap. 
If $S$ is longer than $S'$, the unfolding process merely shifts $i$-\beam($a'$) and everything to its right further rightward, 
thus avoiding any overlap. 

At this point, the Invariant $\I2(i)$ no longer holds for the $i$-band cells in $S'$ --- however, the subsequent unfolding process for $\O_i$ does not interact with this particular net piece. Furthermore, since $R_i$ is a front cell (by assumption), none of the band cells $r \in S'$ may be visited just before $R_i$, therefore invariant $\I2(i+1)$ is not applicable to $r$. Regarding relocation:
\begin{itemize}
\squeezelist
\item  Neither $i$-\beam($a$) nor $i$-\beam($b$) is anchored solely on visited $i$-band cells, 
so property $\P6c(i)$ from Invariant $\I6(i)$ does not apply to either beam. 
\item Neither beam is anchored on $R_{i+1}$, so properties $\P6a(i+1)$ and $\P6b(i+1)$ from Invariant $\I6(i+1)$ also do not apply.
\end{itemize}
All other net properties remain unaffected. Consequently, the invariant $\I(i+1)$ continues to hold. 

\subsubsection{Proof of Correctness --- Stage 4, Case 2. }  
This section references the notation used in~\cref{sec:case2}. 
The context for this case is that $a$ is visited and $b$ is unvisited. 
Consider first the case where $S$ is a right segment. Refer to~\cref{fig:unfops-2}. 
Since $i$-\beam($S$) is non-empty and on top of $\O_i$, $S$ is adjacent to, and no longer than, $i$-\beam($a$). 
See~\cref{fig:unfops-2}. 
Let $x$ be the next visited $(i+1)$-band cell in $ccw$ order from $L_{i+1}$, adjacent to it.
We show that $x$ is down-vacant. Consequently, $S$ fits below $x$ in this vacant space, without overlap. 
Let $X = i$-\beam($x$).

\vspace{0.5em}
\begin{figure}[htp]
    \centering
    \includegraphics[page=2,width=0.95\textwidth]{figures/unfops}
    \caption{Proof of correctness for stage 4, case 2: $S$ is a \emph{right} segment and $i$-\beam($x$) is (a) on top of $\O_i$ (b) on the bottom of $\O_{i+1}$.}
    \label{fig:unfops-2}
\end{figure}

\vspace{0.5em}
If $X$ is empty, or non-empty but not in the bottom partition of $\O_{i+1}$ (as depicted in~\cref{fig:unfops-2}(a)), the net property $\P1a(i+1)$ guaranteed by the Invariant $\I1(i)$ tells us that $x$ is down-vacant. 
Otherwise, $X$ is non-empty and in the bottom partition of $\O_{i+1}$. By the $2$-manifold property of $\O$'s surface, $x$ must be coplanar with $L_{i+1}$. 
This, along with the orthogonally convex property of $\O_i$, implies that $X$ is anchored on the $(i+1)$-band cell $x'$ opposite $x$. Refer to~\cref{fig:unfops-2}(b). 
Since the $i$-bridge is on a top face, it cannot intersect $X$. 
Also note that, if both $x$ and $x'$ are visited, then $x$ is first visited. Furthermore, if the $(i+1)$-band is nearly visited, then $R_{i+1}$ is separated by straight band segment from $L_{i+1}$ and therefore must lie $ccw$ from $x'$, which implies that $x'$ is visited. 
Thus, the net property $\P1b(i+1)$ guaranteed by the Invariant $\I1(i)$ applies to $x$, telling us that $x$ is down-vacant.

At this point, the net property $\P1(i+1)$ claimed by Invariant $\I1(i)$ no longer holds for $x$ --- however, the subsequent unfolding process for $\O_i$ does not interact with this particular net piece. Furthermore, since $x$ is adjacent to $L_{i+1}$, Invariant $\I1(i+1)$ is not applicable to $x$, and since 
$x \neq R_{i+1}$, Invariant $\I1(i+2)$ is not applicable to $x$. Regarding relocation:
\begin{itemize}
\squeezelist
\item The beam $i$-\beam($a$) used to attach $S$ is not anchored on $R_{i+1}$, 
so properties $\P6a(i+1)$ and $\P6b(i+1)$ from Invariant $\I6(i+1)$ do not apply to it.
\item This beam is not anchored solely on visited $i$-band cells (since it is anchored on $L_{i+1}$), 
so property $\P6c(i)$ from Invariant $\I6(i)$ does not apply to it. 
\end{itemize}
All other net properties remain unaffected. Consequently, the invariant $\I(i+1)$ continues to hold.

\medskip
Consider now that case where  $S$ is a \emph{left} segment. Refer to~\cref{fig:unfops-3}. Since $b$ is unvisited, $b'$ must be visited (cf.~\cref{lem:strips}). Furthermore, $i$-\beam$(b') \equiv i$-\beam($b$) sits on top of $b'$ in the unfolding net (as described in~\cref{sec:top}). Let $x$ be the $i$-band cell adjacent to $b'$ in the $ccw$ direction. 
Our unfolding procedure attaches $S$ to $i$-\beam($b'$), thereby positioning it in the vertical space above $x$. We must show that no overlap occurs. 

\vspace{1em}
\begin{figure}[htp]
    \centering
    \includegraphics[page=3,width=0.99\textwidth]{figures/unfops}
    \caption{Proof of correctness for stage 4, case 2:: $S$ is a \emph{left} segment and $x$ is (a) orthogonal to $b'$, or (b) parallel to $b'$.}
    \label{fig:unfops-3}
\end{figure}

\vspace{1em}
 Let $X = i$-\beam($x$). If $X$ is empty, $x$ must be adjacent to an $(i+1)$-band rectangle $u$. 
By the net property $\P2(i)$ guaranteed by Invariant $\I2(i)$, either $x$ is up-vacant, or $u$ sits on top of $x$ in the unfolding net, and $u$ is up-vacant. 
Given that $\O$'surface is a 2-manifold, $x$ cannot share a vertex with $S$. Therefore, in the unfolding net,
there is at least one cell's worth of vertical space between $x$ and $S$, ensuring that no overlap occurs. 
Note that the property $\P3(i)$ claimed by Invariant $\I3(i)$ remains unaffected, since it involves only the cell immediately above $x$. 
 Assume now that $X$ is non-empty. We discuss two scenarios:
\begin{enumerate}
\item $x$ is orthogonal to $R_i$ (and so orthogonal to $b'$). Refer to~\cref{fig:unfops-3}(a). In this case $X$ must be on top of $\O_i$ (because the surface of $\O$ is a 2-manifold). Note that $X$ is not one of the beams in the partition of $\O_i$, because the the top of $\O_i$ is partitioned into beams parallel to $i$-\beam($R_i$), and $x$ is orthogonal to $R_i$. 
By the net property $\P2a(i)$ guaranteed by Invariant $\I2(i)$, $x$ is up-vacant, so $S$ fits above $x$ without overlap. 
\item $x$ is parallel to $R_i$. Assume first that $X$ is on top of $\O_i$, as depicted in~\cref{fig:unfops-3}(b). In this case, $X$ sits on top of $x$ in the unfolding net (as described in~\cref{sec:top}). However, $X$ is shorter than the vertical distance between $x$ and $S$ (since $S$ is a left segment), therefore it does not overlap with $S$.  Since $X$ is not anchored on $L_{i+1}$ or an unvisited $i$-band cell, the net property $\P2c(i)$ guaranteed by Invariant $\I2(i)$ applies to $X$. This tells us that $X$ is up-vacant, therefore no overlap occurs.

Assume now that $X$ is on the bottom of $\O_{i+1}$, as depicted in~\cref{fig:unfops-3}(c). 
In this case, $X$ may or may not be in the partition of $\O_{i+1}$. 
In either case, note that $X$ does not cross the $i$-bridge, since the $i$-bridge is on a top face. By~\cref{def:stable}, $X$ is stable.  
If the $(i+1)$-band is nearly-visited, all back $(i+1)$-band cells must be visited.  
Consequently, $X$'s anchor on the $(i+1)$-band is visited. 
In this case, the net property $\P2b(i)$ guaranteed by Invariant $\I2(i)$ applies to $x$, telling us that $x$ is up-vacant. 
Therefore, $S$ fits above $x$ without overlap. 
\end{enumerate} 
At this point, the Invariant $\I2(i)$ no longer holds for $x$ --- however, the subsequent unfolding process for $\O_i$ does not interact with this specific net piece. Furthermore, since $x$ is not visited just before $R_i$ (because $x$ is a back face), Invariant $\I2(i+1)$ is not applicable to $x$. 
Regarding relocation: 
\begin{itemize}
\squeezelist
\item The beam $i$-\beam($b'$) used to attach $S$ is not anchored on $R_{i+1}$, 
so properties $\P6a(i+1)$ and $\P6b(i+1)$ from Invariant $\I6(i+1)$ do not apply.
\item Since $b$ is unvisited, $i$-\beam($b'$) is not anchored solely on visited $i$-band cells, 
so property $\P6c(i)$ from Invariant $\I6(i)$ does not apply to this beam. 
\end{itemize}
All other net properties remain unaffected. Consequently, the invariant $\I(i+1)$ continues to hold. 

\subsubsection{Proof of Correctness --- Stage 4, Common scenario 1.}   
This and the next three sections reference the notation used in~\ref{sec:appendix-common}: 
$L = i$-\beam($L_i$), $B = (i-1)$-\beam($L_i$), $f$ is the front $(i-1)$-band cell anchored to $B$ (or to $L_i$ if $B$ is empty), 
$y$ is the $(i-1)$-band cell $cw$ from $f$, $Y = (i-1)$-\beam($y$), and $y' \neq y$ is the other anchor of $Y$. 
Also, the the $i$-pointer is $ccw$ and the $i$-band is not nearly-visited. 
Refer to~\cref{fig:unfops-common-2}. 
The unfolding process for this scenario begins, if necessary, by relocating the cell $\ell \in L$ adjacent to $L_i$. 
This relocation preserves the local invariant $\I(i)$, since $\ell$ is an endpoint of $L$ and nothing else is attached to it. 

The context for common scenario 1 is that $\ell$ sits on top of $L_i$ in the unfolding net, $B$ is empty or on top of $\O_{i-1}$, the $(i-1)$-bridge is $B$, and 
$f = R_{i-1}$ . 
Note that Invariant $\I2(i)$ applies to the $(i-1)$-band cell $y$, since $y$ is visited just before $R_{i-1}$. %
If $Y$ is empty, then $y$ meets the conditions of the net property $\P2(i-1)$ guaranteed Invariant $\I2(i)$. 
Similarly, if $Y$ is not in the partition of $\O_{i-1}$ (i.e., $y$ is orthogonal to $R_{i-1}$),
then $y$ meets the conditions of the net property $\P2a(i-1)$ guaranteed by Invariant $\I2(i)$.
In both cases, this property indicates that $y$ is up-vacant. Consequently, $S$ fits in the space above $y$ without overlap. 
At this point, the Invariant $\I2(i)$ no longer holds for $y$ --- however, the subsequent unfolding process for $\O_i$ 
does not interact with this specific net piece. 
Note that the cell space immediately above $y$ remains vacant 
(since its vertical distance to $S$ is at least one cell's worth, guaranteed by the presence of $L_i$),
so property $\P3(i)$ holds for $y$. All other net properties remain unaffected. 
Consequently, the Invariant $\I(i+1)$ continues to hold. 

Assume now that $Y$ is a beam in the partition of $\O_{i-1}$ (i.e., $y$ is parallel to / coplanar with $R_{i-1}$). 
Since $Y$ is not part of the $(i-1)$-bridge (by case assumption), it sits on top of a visited $(i-1)$-band cell in the unfolding net. 
If $Y$ is anchored on the $i$-band, then $Y$ appears above $y$ in the unfolding net. Otherwise, 
both anchors $y$ and $y'$ of $Y$ are on the $(i-1)$-band. By~\cref{lem:front-allback}, $y'$ is visited. 
Furthermore, the $(i-1)$-band cell opposite $R_{i-1}$ must be visited, implying that $y' \neq L_{i-1}$. 
In this situation, since $y$ is visited after $y'$, 
the unfolding algorithm places $Y$ above $y$ in the unfolding net, as outlined in~\cref{sec:top}. By the net property $\P2c(i-1)$ guaranteed by the Invariant $\I2(i)$, $Y$ is up-vacant. 

If $Y$ is not adjacent to $S$, then $B$ is non-empty and at least as long as $Y$, as depicted in~\cref{fig:unfops-common-2}(b). 
Since $Y$ is up-vacant and no longer than the vertical space between $y$ and $S$, it does not overlap with $S$. Therefore, the result is a single, non-overlapping, connected net. 
At this point, invariant $\I2(i)$ no longer holds for $y$ --- however, the subsequent unfolding process for $\O_i$ does not interact with this particular net piece. 

If $Y$ is adjacent to $S$, then attaching $S$ to both $Y$ and $\ell$ reconnects the two components obtained by cutting the $(i-1)$-band along the side shared by 
$y$ and $R_{i-1}$. This results is a single  connected component. This operation merely shifts the right net component one unit to the right, therefore no overlap occurs.
Regarding relocation, if $Y$ is anchored on $R_i$, then property $\P6a(i)$ claimed by the Invariant $\I6(i)$ is applicable to $Y$. 
However, the $(i-1)$-face cell $\gamma$ located at the top of $Y$ in the unfolding net may no longer be relocatable 
if it is the only part of $Y$ adjacent to the newly attached strip $S$. However, this is not an issue, as  
the subsequent unfolding process for $\O_i$ does not interact with this particular net piece. 
All other net properties remain unaffected. 

Overall, this unfolding process preserves the local invariant $\I(i+1)$. 

\subsubsection{Proof of Correctness --- Stage 4, Common scenario 2. }
The context for common scenario 2 is that $\ell$ sits on top of $L_i$ in the unfolding net, $B$ is empty or on top of $\O_{i-1}$, the $(i-1)$-bridge consists of multiple beams---one of which is $B$---and $T$ is the $(i-1)$-bridge beam adjacent to $B$. 

The unfolding procedure and its correctness for the case where $T$ is adjacent to $S$ is identical to that of unfolding scenario 1 for this context. 
We now examine the scenario where $T$ is not adjacent to $S$. Refer to~\cref{fig:unfops-common-4}. 
In this case, $S$ gets placed in the vertical space above $T$. 
%
Let $g$ and $h$ be $T$'s front and back anchors, respectively. 
Since $T$ is not adjacent to $S$, by the orthogonal convexity of $\O_i$ and the $2$-manifold property of $\O$, both $g$ and $h$ on the $(i-1)$-band. 

If $R_{i-1}$ is a front cell, then $(i-1)$-pointer = $ccw$ (same as the $i$-pointer). 
By the net property $\P4(i-1)$ guaranteed by the Invariant $\I4(i)$, $T$ is up-vacant. 
Consequently, $S$ fits in the space above $T$ without overlap.
At this point, the Invariant $\I4(i)$ no longer holds for $T$ --- however, the subsequent 
unfolding process for $\O_i$ does not interact with this particular net piece. 

If $R_{i-1}$ is a back cell, then $(i-1)$-pointer = $cw$ (opposite to the $i$-pointer). 
Assume first that $h \neq R_{i-1}$, as depicted in~\cref{fig:unfops-common-4}(a). 
In this case, we show that the net property $\P5(i-1)$ from the Invariant $\I5(i)$ applies to $T$, 
indicating that $T$ is up-vacant. 
By~\cref{lem:bridge-visited}, all front and back $(i-1)$-band cells are visited. 
This implies that $R_{i-1}$ and $L_{i-1}$ are connected by a straight 
$(i-1)$-band segment $U$, which may be empty if $R_{i-1}$ and $L_{i-1}$ are adjacent. 
Our context is that $T$ is not adjacent to $S$, so the $(i-1)$-band segment extending $cw$ from $h$ is a left segment. 
This, combined with the orthogonally convex property of $\O_{i-1}$
and the fact that $R_{i-1}$ is a back cell, indicates that $U$ is a left segment, if non-empty. 
Consequently, if $R_{i-1}$ and $L_{i-1}$ are not adjacent, then $(i-1)$-\beam($R_{i-1}$) cannot be incident to $L_{i-1}$.
Therefore, the net property $\P5(i-1)$ applies to $T$, telling us that $T$ is up-vacant. 
Since $T$ is not adjacent to $S$, it does not overlap $S$ in the net.  
At this point, the Invariant $\I5(i)$ no longer holds for $T$ --- however, the subsequent unfolding process for $\O_i$ does not interact with this particular net piece.

Next we discuss the case where $h = R_{i-1}$, depicted in~\cref{fig:unfops-common-4}(b).  In this case, $S$ occupies the vertical space above $R_{i-1}$ in the unfolding net, so we must ensure that $R_{i-1}$ is up-vacant. 
%
We first show that the $(i-1)$-band is nearly-visited. Since $T$ is not adjacent to $S$, the straight $(i-1)$-band segment $U$ 
extending $cw$ from $R_{i-1}$ is a left segment. By the orthogonally convex property of $\O_i$ and the 2-manifold property of $\O$'s surface, $U$ must be adjacent to $S$ along a shared grid edge. Thus, there is $q \in U$ adjacent to $S$, with $(i-1)$-\beam($q$) empty. By~\cref{def:ri}, $q$ is a candidate for $R_{i-1}$ and is thus visited. This implies that $L_{i-1} \in U$, therefore the $(i-1)$-band is nearly-visited.. 

In this case, the unfolding algorithm described in~\cref{sec:bottom} partitions the bottom of $\O_{i-1}$ into beams parallel to $(i-2)$-\beam($L_{i-1}$), which are orthogonal to $(i-2)$-\beam($R_{i-1}$). This implies that $(i-2)$-\beam($R_{i-1}$) is not a beam in the partition of $\O_{i-1}$. By property $\P1a(i-1)$ guaranteed by Invariant $\I1(i)$, $R_{i-1}$ is up-vacant.  Consequently, $S$ fits above it without overlap. At this point, Invariant $\I1(i)$ no longer holds for $R_{i-1}$ --- however, the subsequent unfolding process for $\O_i$ does not interact with this particular net piece. Also note that, since $T$ is part of the $(i-1)$-bridge, property $\P6(i)$ is not applicable to $T$. All other net properties remain unaffected. 
In all cases, Invariant $\I(i+1)$ continues to hold. 
%


\subsubsection{Proof of Correctness --- Stage 4, Common scenario 3. }  
%
%
The context for common scenario 3 is that $\ell$ sits on top of $L_i$ in the unfolding net, $B$ is non-empty and on the bottom of $\O_{i}$, 
the $(i-1)$-bridge is $B$, and $f = R_{i-1}$. 
Note that, since $y$ is visited just before $R_{i-1}$, the Invariant $\I2(i)$ applies to $y$. 
We now show that $y$ is up-vacant, so $S$ fits above it without overlap.

Assume first that $Y$ is non-empty and on top of $\O_{i-1}$. See~\cref{fig:unfops-common-42}(b). 
If $y$ were a left cell, then $Y$ would be empty, contradicting our assumption. 
If $y$ were a right cell, the neighborhood around a top vertex of $y$ would not be homeomorphic to a disc, contradicting the 
$2$-manifold property of $\O$. It follows that $y$ is a front cell. This further implies that $Y$ is a beam in the partition of $\O_{i-1}$ adjacent to $S$. 
Arguments similar to those used in common scenario 1 show that 
the unfolding procedure produces a connected, non-overlapping net satisfying Invariant $\I(i+1)$. 

Assume now that $Y$ is empty or on the bottom of $\O_i$. 
Recall that in our context, the $i$-band is not nearly-visited.
Therefore, if $Y$ is empty, then $y$ meets the conditions of the net property $\P2(i-1)$ guaranteed Invariant $\I2(i)$, 
which tells us that $y$ is up-vacant. 
Consider now the case where $Y$ is non-empty. If $y$ is a left cell (as depicted in~\cref{fig:unfops-common-1}(a)) or a front cell (as depicted in~\cref{fig:unfops-common-1}(b)), then $Y$ cannot cross the $(i-1)$-bridge $B$. 
If $y$ is a right cell (as depicted in~~\cref{fig:unfops-common-1}(c)), then the portion of $Y$ not overlapping $B$ is either empty or consists of a single component anchored on the $i$-band. In either of these case, $Y$ is stable (cf.~\cref{def:stable}). This, along with the fact that the $i$-band is not nearly-visited, indicates that the net property $\P2b(i-1)$ guaranteed by Invariant $\I2(i)$ applies to $y$. Therefore, $y$ is up-vacant. 

At this point, Invariant $\I2(i)$ no longer holds for $y$ --- however, the subsequent unfolding process for $\O_i$ does not interact with this particular net piece. 

Regarding relocation, the beam $B$ used to attach $S$ is the $(i-1)$-bridge, therefore the net property $\P6(i)$ is not applicable to $B$. 
All other net properties remain unaffected. Consequently, the Invariant $\I(i+1)$ continues to hold.

\subsubsection{Proof of Correctness --- Stage 4, Common scenario 4. }  
The context for common scenario 4 is that $\ell$ does not sit above $L_i$ in the unfolding net, $i$-\beam($L_i$) is part of the $i$-bridge, $L_i$ and $R_i$ are connected by $S$, and $z \in S$ is adjacent to $R_i$. See~\cref{fig:unfops-common-5}. Let $Z = (i-1)$-\beam($z$). 

\paragraph{Case $Z$ empty} Refer to~\cref{fig:unfops-common-5}(a). 
By the net property $\P3(i-1)$ guaranteed by Invariant $\I3(i)$, $z$ attaches above $u$ without overlap. If $S = z$, the unfolding is complete. Assume now that $S \setminus z$ is non-empty. 

By the orthogonally convex property of $\O_i$, $R = (i-1)$-\beam($R_i$) is a bottom beam anchored on two visited $i$-band cells, 
namely $R_i$ and $h$. 
In such a case, our the unfolding algorithm attaches $R$ to the second visited $i$-band cell, which is $R_i$. 
Since $R$ is parallel to the $(i-1)$-bridge, it cannot cross it. 
Since $L_i$ and $R_i$ are parallel, the $i$-band is not nearly-visited. 
By the net property $\P1b(i)$ guaranteed by Invariant $\I1(i)$, $h$ is down-vacant. 
By the net property $\P6b(i)$ guaranteed by Invariant $\I6(i)$, $R$ is relocatable. 
These together enable us to disconnect $R$ from $R_i$ and reattach it below $h$ in the unfolding net 
without breaking connectivity or causing overlap, thereby rendering $R_i$ down-vacant. As a result, 
$S \setminus z$ can attach to the $i$-bridge to land below $R_i$ without overlap. 

At this point, $R$ is no longer relocatable and Invariant $\I6(i)$ no longer holds for $R$.
Also, Invariant $\I1(i)$ no longer holds for $h$. 
The unfolding of $\O_i$, however, is complete.
Furthermore, since the $i$-band is not nearly-visited, invariant $\I1(i+1)$ is not applicable to $h$. 
All other net properties remain unaffected. 
Consequently, the invariant $\I(i+1)$ continues to hold. 
 
\paragraph{Case $Z$ on the bottom of $\O_i$} Refer to~\cref{fig:unfops-common-5}(b). 
Since $Z$ is on the bottom of $\O_i$, $G = (i-1)$-\beam($g$) is also on the bottom of $\O_i$ and anchored on the $(i-1)$-band. Our unfolding algorithm attaches $G$ below $g$ (see~\cref{sec:bottom}), and $S$ to $G$. 
Once attached to $G$, the segment $S$ occupies the space below $h$, which---
as shown in the previous case when $Z$ is empty---is down-vacant. Thus, no overlap occurs.

Regarding relocation, the beam $G$ used to attach $S$ is not solely anchored on the $i$-band, so the net property 
$\P6b(i)$ is not applicable to $G$. This, together with arguments from the previous case, shows that the invariant 
$\I(i+1)$ continues to hold. 

\paragraph{Case $Z$ on the top of $\O_{i-1}$} Refer to~\cref{fig:unfops-common-5}(c). 
Since $R_i$ is not adjacent to $L_i$, the net property $\P1a(i)$ guaranteed by Invariant $\I1(i)$ applies to $R_i$, indicating that 
$R_i$ is down-vacant. Consequently, $x$ can be attached below $R_i$ without overlap.  We now show that $x$ is relocatable. 

Let $T = (i-1)$-beam($R_i$) be the top beam containing $x$, and let $u$ be $T$'s front anchor. By~\cref{lem:beam-ends}, $u$ is on the $(i-1)$-band. 
Note that $T$ may not be part of the $(i-1)$-bridge; otherwise, $R_i$ would be adjacent to the $(i-1)$-bridge and therefore a candidate for $L_i$, contradicting~\cref{def:ri}. Also, note that $T$ is in the partition of $\O_{i-1}$, since $R_i$ is parallel to $L_i$.
Our unfolding algorithm positions $T$ on top of $u$ (see~\cref{sec:top}).

If $T \setminus x$ is non-empty, or if $T$ is not anchored on $L_{i-1}$, then by 
the net property $\P6a(i)$ guaranteed by Invariant $\I6(i)$, $x$ is relocatable. 
This allows us to relocate $x$ below $R_i$, without disconnecting the net and without overlap. We now show that this is 
the only possible scenario. 

\vspace{1em}
\begin{figure}[htp]
    \centering
    \includegraphics[page=6,width=0.6\textwidth]{figures/unfops-common}
    \caption{Common scenario 4.3 correctness: $x = (i-1)$-\beam($R_i$) is anchored on $L_{i-1}$.}
    \label{fig:unfops-common-6}
\end{figure}

\vspace{1em}
Assume, for the sake of contradiction, that $T \setminus x$ is empty and $u = L_{i-1}$ (see~\cref{fig:unfops-common-6}).  
%
Note that the  $(i-1)$-face $F$ containing $T$ also includes  $(i-1)$-\beam($z$), which, by~\cref{lem:beam-ends},
is anchored on an $(i-1)$-band cell $q$. By~\cref{lem:liri-out}, $q$ is visited. 
Since $T$ consists of a single cell (namely $x$) and $S$ is a non-empty left segment, 
$F$ does not include $B$. 
Moreover, since $B$ is part of the $(i-1)$-bridge (cf.~\cref{def:ri}), it follows that 
$F$ does not include the $(i-1)$-bridge. Consequently, $R_{i-1}$ is not adjacent to $F$. 
Given that $q$ is visited and adjacent to $F$, we conclude that the $(i-1)$-pointer is $cw$.  
This implies that $f$, the last visited $(i-1)$-band cell parallel to $L_i$, must be $R_{i-1}$ (cf.~\cref{def:ri}). 
However, in this case, both $R_{i-1}$ and $L_i$ are front faces, implying that the $i$-pointer is also $cw$ (same as the $(i-1)$-pointer), 
a contradiction. 

We have shown that the net property $\P6a(i)$ applies to $T$ and thus $x$ is relocatable. 
The subsequent unfolding procedure that cuts the $i$-bridge and inserts $S$ between the two resulting components 
merely shifts the right net component one unit to the  right and does not affect the local invariant. 

After relocating $x$, Invariant $\I6(i)$ no longer holds for $T$ and $\I1(i)$ no longer holds for $R_i$ --- however, 
the unfolding of $\O_i$ is now complete. 
Furthermore, since the $i$-band is not nearly-visited, invariant $\I1(i+1)$ does not apply to $R_i$. 
All other properties remain unaffected, so invariant $\I(i+1)$ continues to hold.  

\subsubsection{Proof of Correctness --- Stage 4, Case 3. }  
This section references the notation used in~\cref{sec:case3}. 
The context for this case is that both $a$ and $b$ are visited. 
Since $i$-beam($S$) is on top of $\O_i$, the $i$-bridge must be on top of $\O_i$. 
%
In particular, $i$-\beam($R_i$) sits on top of $R_i$ in the unfolding net and is adjacent to $S$ along its right side. 
Refer to~\cref{fig:unfops-4}. 

\vspace{1em}
\begin{figure}[htp]
    \centering
    \includegraphics[page=4,width=0.99\textwidth]{figures/unfops}
    \caption{Case 3: $S$ is a \emph{right} segment and the $i$-bridge consists of (a) multiple beams (b) a single beam.}
    \label{fig:unfops-4}
\end{figure}

\vspace{1em}
Assume first that the $i$-bridge consists of two or more beams, and let $Y$ be the $i$-bridge beam adjacent to $i$-\beam($R_i$). Since $Y$ is not anchored on $R_{i}$, the net property $\P5(i)$ guaranteed by invariant $\I5(i)$ tells us that $Y$ is down-vacant. 
Our unfolding procedure attaches $S$ to the right side of $i$-\beam($R_i$), positioning it just below $Y$ in the unfolding net, therefore no overlap occurs. 

At this point, the Invariant $\I5(i)$ no longer holds for $Y$ --- however, the subsequent unfolding process for $\O_i$ does not interact with this net piece. Furthermore, since $L_i$ and $R_i$ are not adjacent (because $S$ is non-empty) and $i$-\beam($R_i$) is incident to $L_i$, Invariant $\I5(i+1)$ is not applicable to $Y$. 
Regarding relocation, the beam $i$-\beam($R_i$) used to attach $S$ is part of the $i$-bridge, so net property $\P6(i)$ does not apply to 
it. All other net properties remain unaffected. 
Consequently, the invariant $\I(i+1)$ continues to hold. 

Assume now that the $i$-bridge consists of a single beam, namely $i$-\beam($R_i$). In this case the $(i+1)$-pointer is $ccw$ (same as $i$-pointer). Refer to~\cref{fig:unfops-4}(b). This case mirrors Case 2 (\cref{sec:case2},~\cref{fig:unfops-2}). It is important to note that the assumption of $b$ being unvisited in Case 2 is solely used to establish that the $i$-bridge consists of a single beam, which aligns with the conditions of this case. Arguments identical to the ones used in Case 2 show that no overlap occurs and Invariant $\I(i+1)$ continues to hold.

\subsubsection{Proof of Correctness --- Stage 4, Case 4. }  
This section references the notation used in~\cref{sec:case4}. The context for this case is that $a$ is unvisited and $b$ is visited. 
Recall that $a'$ be the $i$-band cell opposite $a$. 
By~\cref{lem:front-allback}, $a'$ is visited.   

\vspace{1em}
\begin{figure}[htp]
    \centering
    \includegraphics[page=7,width=0.88\textwidth]{figures/unfops}
    \caption{Case 4: $S$ is a \emph{left} segment and $B$, $B'$ are (a) distinct (b) identical.}
    \label{fig:unfops-7}
\end{figure}

\vspace{1em}
Consider first the case where $S$ is a left segment. Refer to~\cref{fig:unfops-7}. 
The case where $b = L_i$ matches the setting for the common scenario, whose correctness has already been established. 
%
Suppose now that $S$ includes $L_i$. 
By~\cref{lem:liri-out}, $i$-beam($S^*$) is non-empty, lies on top of $\O_i$, and is anchored on the $i$-band only. 
This implies that $S^*$ is adjacent along its top side to either $B = i$-\beam($b$) or $B' = i$-\beam($b'$). 

We  show that $S^*$ is adjacent to $B$. Assuming the opposite, $B'$ must be adjacent to $S^*$, anchored on an $(i+1)$-band cell. This implies that $i$-\clip($a$) includes $B'$ and is adjacent to an $(i+1)$-band cell parallel to $a$, contradicting~\cref{lem:liri-out}.  
Therefore, $S^*$ must be adjacent to $B$. 

If $B \neq B'$ (as shown in~\cref{fig:unfops-7}(a)), then $B$ must be anchored on an $(i+1)$-band cell. In this case, the unfolding algorithm places $B$ above $b$ in the unfolding net (see~\cref{sec:top}). 

Assume now that $B = B'$, as shown in~\cref{fig:unfops-7}(b). 
Note that $B$ cannot be part of the $i$-bridge -- otherwise, the $i$-bridge would consist of multiple beams, including $i$-\beam($a$),  
and $a$ would be visited (cf.~\cref{lem:bridge-visited}), contradicting this case assumption. 
This, along with the fact that $b'$ is visited after $b$ (since $L_i$ is part of $S$), 
implies that $B$ appears above $b'$ in the unfolding net (see~\cref{sec:top}). 
By the net property $\P6c(i)$ guaranteed by Invariant $\I6(i)$, the beam $B$ is relocatable, 
so relocating it to sit on top of $b$ does not disconnect the net. 


At this point, it is guaranteed that $B$ appears above $b$ in the unfolding net. 
%
Let $u \in S$ to be the $i$-band cell adjacent to $b$ in the $cw$ direction. Since $S$ includes $L_i$, $u$ is visited and appears
left of $b$ in the unfolding net. Since $u$ is orthogonal to $R_i$ (which is parallel to $a$ by assumption), 
$i$-\beam($u$) is not in the partition of $\O_i$ (also note that it is non-empty, since it is part of $B$). By the net property $\P2a(i)$ guaranteed by Invariant $(\I2)$, $u$ is up-vacant. Consequently, $S^*$ fits in this vacant space without overlap. 

At this point, the Invariant $\I2(i)$ no longer holds for $u$ and $\I6(i)$ no longer holds for $B$ --- however, the subsequent unfolding 
process for $\O_i$ does not interact with this net piece. Furthermore, since $b \neq R_i$, the Invariant $\I2(i+1)$ is not applicable to $u$. 
Thus, Invariant $\I(i+1)$ continues to hold. 

\vspace{1em}
\begin{figure}[htp]
    \centering
    \includegraphics[page=6,width=0.95\textwidth]{figures/unfops}
    \caption{Case 4: $S$ is a \emph{right} segment and $b$ is a front cell.}
    \label{fig:unfops-6}
\end{figure}

\vspace{1em}
Consider now the case where $S$ is a right segment. In this case, $b$ may be a front of a back cell. 
Assume first that $b$ is a front cell. Refer to~\cref{fig:unfops-6}. 
This scenario is a horizontal reflection of the setup in Case 2 (see~\cref{fig:unfops-3}). The only difference is that $i$-\beam($b$), which corresponds to the horizontal reflection of  $i$-\beam($R_i$) in~\cref{fig:unfops-3}, is not the $i$-bridge in this case. 
Using the same notation, let $x$ is the $i$-band cell $cw$ from $a'$, and let $X = i$-\beam($x$). 
The proof of correctness for the cases where $X$ is empty or $x$ is orthogonal to $R_i$ (and thus to $a$ and $b$) is independent 
of the presence of the $i$-bridge and applies here as well. For the case where $X$ is non-empty and $x$ is parallel to $R_i$, 
the following two observations allow the proof of correctness to remain the same: 
\begin{itemize}
\squeezelist
\item If $X$ is on the bottom of $\O_{i+1}$ (as depicted in~\cref{fig:unfops-6}(b)), we must show that, if the $(i+1)$-band is nearly-visited, then $X$'s anchor on the $(i+1)$-band is visited. 
Since $a$ is unvisited, $i$-\clip($a$) is not adjacent to an $(i+1)$-band rectangle parallel to $a$ (cf.~\cref{lem:liri-out}). This, along with~\cref{lem:opposite-edges}, implies that 
$i$-\beam($b$) is empty and $\O_{i+1}$ is adjacent to $i$-\beam($a$) along a left $(i+1)$-band segment $U$. Since $a'$ is visited, $R_i$ lies $ccw$ from $a'$. 
These together  imply that $U$ includes $L_{i+1}$. Consequently, if the $(i+1)$-band is nearly-visited, $X$'s anchor on the $(i+1)$-band is precisely $R_{i+1}$ and is, therefore, visited. 
\item If $X$ is on top of $\O_i$ (as depicted in~\cref{fig:unfops-6}(c)), then $X$ cannot be anchored on $L_{i+1}$. Otherwise, 
$i$-\clip($a$), which includes $X$, would be adjacent to $L_{i+1}$, rendering $a$ visited (cf.~\cref{lem:liri-out}), a contradiction.
\end{itemize}
These observations allow the correctness proof from Case 2 to carry over to this scenario.

\setlength{\intextsep}{-0.2em}%
\setlength{\columnsep}{1em}%
\begin{wrapfigure}{r}{0.43\textwidth}
\raisebox{-1.5em}{\includegraphics[page=8,width=0.43\textwidth]{figures/unfops}}
 \caption{Case 4: $S$ is a right segment and $b$ is a back cell.}
\label{fig:unfops-8}
\end{wrapfigure}

Assume now that $b$ is a back cell, as depicted in~\cref{fig:unfops-8}. Since $b$ is the first $i$-band cell parallel to $a$ encountered in a 
$ccw$ walk from $a$, we have $a' = b$. 
If $L_i \in S$, then the $i$-band cell $cw$ from $b$  is visited and the unfolding process for $S^*$ matches the one depicted in~\cref{fig:unfops-6}: 
$S^*$ gets attached to the left side of $B = i$-\beam($b$). Arguments similar to those outlined above show no overlap occurs and the Invariant $\I(i+1)$ continues to hold. 

\medskip
If $L_i \notin S$, $b = L_i$ (since $b$ is visited) and we are in the common scenario, whose correctness has already been established. 
This concludes the proof of correctness for the unfolding algorithm.

\end{document}